\newtheorem{theorem}{Theorem}
\newtheorem{lem}{Lemma}
\newtheorem{prop}{Proposition}
\newtheorem{example}{Example}
\newcommand{\bF}{{\mathbb F}}
\newcommand{\fqm}{{\mathbb F}_{q^m}}
\newcommand{\B}{{\mathcal{B}}}
\newcommand{\C}{{\mathcal{C}}}
\begin{document}

\title{Generalized Hamming Weights of Linear Codes from \\ Quadratic Forms over Finite Fields of Even Characteristic}

\author{Chao Liu, Dabin Zheng and Xiaoqiang Wang
  \thanks{Chao Liu, Dabin Zheng and Xiaoqiang Wang are with the Hubei Key Laboratory of Applied Mathematics, Faculty of Mathematics and Statistics, Hubei University, Wuhan 430062, China, E-mail: chliuu@163.com, dzheng@hubu.edu.cn, waxiqq@163.com. The corresponding author is Dabin Zheng. The research of Dabin Zheng and Xiaoqiang Wang was supported by NSFC under Grant Numbers 62272148, 11971156, 12001175.}}

\date{}
\maketitle

%\leftskip 0.8in
%\rightskip 0.8in
%\noindent {\bf Abstract.}
\begin{abstract}
The generalized Hamming weight of linear codes is a natural generalization of the minimum Hamming distance. They convey the structural information of a linear code and determine its performance in various applications, and have become one of important research topics in coding theory. Recently, Li (IEEE Trans. Inf. Theory, 67(1): 124-129, 2021) and Li and Li (Discrete Math., 345: 112718, 2022) obtained the complete weight hierarchy of linear codes from a quadratic form over a finite field of odd characteristic by analysis of the solutions of the restricted quadratic equation in its subspace. In this paper, we further determine the complete weight hierarchy of linear codes from a quadratic form over a finite field of even characteristic by carefully studying the behavior of the quadratic form on the subspaces of this field and its dual space, and complement the results of Li and Li.
\end{abstract}

\vskip 6pt
\noindent {\it Keywords.} Generalized Hamming weight, weight hierarchy, linear code, quadratic form.
\vskip6pt
\noindent {\it  2010 Mathematics Subject Classification.}\quad  94B05, 94B15

\vskip 30pt

%\leftskip 0.0in
%\rightskip 0.0in

\section{Introduction}\label{sec1}
Let $q$ be a power of a prime number and ${\mathbb F}_q$ be a finite field with $q$ elements. A linear code $\C$ over $\bF_q$ with parameters $[n,k,d] $ is a $k$-dimensional subspace of $\bF_q^n$
with length $n$ and minimum Hamming distance $d$. For each $r$ with $1\leq r\leq k$, let $[\C, r]$ be the set of all $r$-dimensional $\bF_q$-subspaces of $\C$.
For each $H_r\in [\C, r]$, the {\em support} of $H_r$, denoted by supp($H_r$), is the set of nonzero coordinates of nonzero vectors in $H_r$, i.e.,
$$ {\rm supp}(H_r)=\left\{ i \, :\, 1\leq i\leq n, \,\, c_i\neq 0  \,\, {\rm for \,\, some} \,\, (c_1,c_2,\cdots,c_n)\in H_r \right\}.$$
The $r$-th {\em generalized Hamming weight} (GHW) of $\C$ is defined by
	$$ d_r(\C)= {\rm min} \left\{ \,|{\rm supp}(H_r)|\,:\, H_r \in [ \C, r]\, \right\},$$
where $|{\rm supp}(H_r)|$ denotes the cardinality of the set ${\rm supp}(H_r)$. The set $\left\{ d_1(\C), d_2(\C),\cdots, d_k(\C)\right\}$ is called the {\em weight hierarchy}
of $\C$. Note that $d_1(\C)$ is just the minimum Hamming distance of~$\C$.

The notion of GHWs was introduced by Helleseth et al.~\cite{Hellesethetal1977} and Kl{\o}ve~\cite{Klove1978}, which is a natural generalization of the minimum distance $d_1(\C)$. The GHWs of linear codes provide fundamental information of linear codes which are important in many applications. In 1991, Wei~\cite{Wei1991} first gave a series of beautiful results on GHW and used it to characterize the cryptography performance of a linear code over the wire-tap channel of type II. The GHW was also used to deal with t-resilient functions, and trellis or branch complexity of linear codes~\cite{Chor1985,Geometric}.  Apart from these cryptographic applications, the GHWs also can be applied to computation of the state and branch complexity profiles of linear codes~\cite{Forney1994,Kasamietal1993}, indication of efficient ways to shorten linear codes~\cite{Hellesethetal19951}, determination of the erasure list-decodability of linear codes~\cite{Guruswami2003}, etc..

The study of GHWs of linear codes has attracted much attention in the past two decades, and many results have been obtained in the literature.
For example, general lower and upper bounds on GHWs of linear codes were derived~\cite{Ashikhmin1999,Cohen1994,Hellesethetal19951}, and the GHWs have been determined or estimated for many classes of linear codes such as Hamming codes~\cite{Wei1991}, Reed-Muller codes~\cite{Heijnen1998,Wei1991}, binary Kasami codes~\cite{Hellesethetal19952}, Melas and dual Melas codes~\cite{vandergeer1994}, some BCH codes and their duals~\cite{Cheng1997,Duursma1996,Feng1992,Moreno1998,Shim1995,vandergeer19942,vandergeer1995}, some trace codes~\cite{Cherdien2001,Stichtenoth1994,vandergeer19952,vandergeer1996}, some algebraic geometry codes \cite{Yang1994} and cyclic codes \cite{Lishuxing2017,Ge2016,Yang2015}. Readers may refer to the excellent textbook~\cite{Huffman2003} for a brief introduction to GHWs, and to \cite{Geometric} for a comprehensive survey of GHWs via a geometric approach. However, to determine the weight hierarchy of linear codes is a difficult problem, and to the best of our knowledge, only a few linear codes have known complete weight hierarchies so far.

Let ${\rm Tr}$ denote the trace function from ${\mathbb F}_{q^m}$ to ${\mathbb F}_{q}$. For a set $D=\{d_1,d_2,\cdots, d_n\}\subset {\mathbb F}_{q^m}$,
we define a linear code $\C_D$ of length $n$ over $\bF_q$ as follows:
\begin{equation}\label{eq:def-code}
\C_D=\left\{ {\bf c}(x) = \left( {\rm Tr}( xd_1), {\rm Tr}(xd_2), \cdots, {\rm Tr}(xd_n) \right), \,\, x\in {\mathbb F}_{q^m} \right\}.
\end{equation}
The set $D$ is called the defining set of $\C_D$.  This method of construction of linear codes was first proposed by Ding and Niederreiter~\cite{Ding2007}.
A number of linear codes with a few weights were obtained by properly choosing defining sets. For examples see \cite{Ding2015,LiMesnager2020,Wangzhengding2021}
and reference therein.

Let $\C$ be an $[n, m]$ linear code over $\bF_q$ with generator matrix $G$. For an integer $r$ with $1\leq r\leq m$, let $U$ be an $(m-r)$-dimensional subspace of $\bF_q^m$
and $m(U)$ denote the total number of occurrences of the vectors of $U$ as columns of $G$. Helleseth et al. in \cite{Hell1992} gave a formula to calculate the $r$-th generalized Hamming weight of $\C$ as follows:
$$d_r(\C)=n-{\rm max} \left\{ \,m(U)\,:\,U\in \left[ {\mathbb F}_q^m, m-r \right] \right\}, $$
where $\left[ {\mathbb F}_q^m, m-r \right]$ denotes the set of all $(m-r)$-dimensional subspaces of $\bF_q^m$. It has been shown in~\cite{Xiangc} that all linear codes can be obtained from the defining-set construction as in~(\ref{eq:def-code}). Following the idea of Helleseth et al., Li~\cite{Li2018} gave a formula to calculate the $r$-th generalized Hamming weight of $\C_D$ as follows:
\begin{lem}\cite[Theorem~1]{Li2018}\label{lem:GHW}
For each $r$ with $0\leq r\leq m$, if the dimension of $\C_{D}$ given in (\ref{eq:def-code}) is $m$, then
	$$ d_r(\C_D)= n-{\rm max} \left\{ \, |D\cap H_r^\perp|\, :\, H_r\in [{\mathbb F}_{q^m},r] \right\},$$
where $[{\mathbb F}_{q^m},r]$ denotes the set of all $r$-dimensional $\bF_q$-subspaces of $\fqm$ and $H_r^\perp$ is the dual space of $H_r$.
\end{lem}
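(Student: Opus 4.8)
The plan is to realize $\C_D$ as the image of the $\bF_q$-linear evaluation map $\varphi\colon\fqm\to\bF_q^n$ defined by $\varphi(x)={\bf c}(x)$, and to exploit the hypothesis $\dim_{\bF_q}\C_D=m$ to identify $r$-dimensional subcodes of $\C_D$ with $r$-dimensional $\bF_q$-subspaces of $\fqm$. Since $\Tr$ is $\bF_q$-linear, $\varphi$ is $\bF_q$-linear and maps onto $\C_D$; because $\dim_{\bF_q}\fqm=m=\dim_{\bF_q}\C_D$, the map $\varphi$ has trivial kernel and is therefore an isomorphism of $\bF_q$-spaces. Hence $H_r\mapsto\varphi(H_r)$ is a bijection from $[\fqm,r]$ onto the set of $r$-dimensional subcodes of $\C_D$, and computing $d_r(\C_D)$ amounts to minimizing $|{\rm supp}(\varphi(H_r))|$ over all $H_r\in[\fqm,r]$.

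First I would compute the support of a single subcode coordinate by coordinate. The $i$-th coordinate of ${\bf c}(x)$ is $\Tr(xd_i)$, so an index $i$ lies \emph{outside} ${\rm supp}(\varphi(H_r))$ exactly when $\Tr(xd_i)=0$ for every $x\in H_r$, that is, when $d_i\in H_r^\perp$, the dual being taken with respect to the trace form $(x,y)\mapsto\Tr(xy)$ on $\fqm$. Counting the indices outside the support therefore gives
\begin{equation*}
|{\rm supp}(\varphi(H_r))| = n - |D\cap H_r^\perp|,
\end{equation*}
where the right-hand count respects the multiplicity of $D$ as a defining set. Taking the minimum over $H_r\in[\fqm,r]$ turns this into $d_r(\C_D)=n-\max\{|D\cap H_r^\perp|:H_r\in[\fqm,r]\}$, which is the claimed formula.

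I expect no serious obstacle here: once the evaluation map is in place the argument is essentially definitional. The only step that genuinely uses the hypothesis is the bijection between $r$-dimensional subcodes and subspaces $H_r$; without $\dim_{\bF_q}\C_D=m$ the map $\varphi$ could have nontrivial kernel, so a single subcode might be the image of subspaces of various dimensions and the clean correspondence would fail. As an alternative route one can instead invoke the Helleseth et al.\ column formula $d_r(\C)=n-\max\{m(U):U\in[\bF_q^m,m-r]\}$ quoted above: fixing an $\bF_q$-basis of $\fqm$ makes the columns of the resulting generator matrix the images of the $d_i$ under a fixed $\bF_q$-isomorphism $\psi\colon\fqm\to\bF_q^m$, so $m(U)=|D\cap\psi^{-1}(U)|$; as $U$ ranges over $[\bF_q^m,m-r]$ the preimage $\psi^{-1}(U)$ ranges over all of $[\fqm,m-r]$, and the nondegeneracy of the trace form makes $H_r\mapsto H_r^\perp$ a bijection of $[\fqm,r]$ onto $[\fqm,m-r]$, again identifying the two maxima.
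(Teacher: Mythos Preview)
The paper does not supply its own proof of this lemma: it is quoted verbatim as \cite[Theorem~1]{Li2018} and used as a black box. So there is nothing in the paper to compare your argument against.

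That said, your proof is correct and is essentially the standard one. The only point worth flagging is terminological: in this lemma $H_r^\perp$ denotes the dual with respect to the trace pairing $(x,y)\mapsto\Tr(xy)$, which you identify correctly, whereas from Section~2 onward the paper overloads the same symbol $H^\perp$ to mean the orthogonal with respect to the bilinear form $\ell_f$ attached to the quadratic form. Your proof uses the right notion, but a reader skimming the paper might conflate the two, so it would be worth stating explicitly which pairing you mean. Your alternative route through the Helleseth--Kl{\o}ve--Ytrehus column formula is also valid and is in fact closer in spirit to how the paper motivates the lemma in the paragraph preceding it.
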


Let $f(x)$ be a quadratic form over $\fqm$. Define a subset of $\fqm$ as follows:
\begin{equation}\label{eq:definingset}
	D_f=\left\{ x\in \bF_{q^m} \, :\, f(x) = a,\,\,a\in \bF_q\right\}.
\end{equation}
In the case $a=0$, the weight hierarchy of $\C_{D_f}$ can be derived from the results in~\cite{Wan1997,Wan1994}, in which the discussed linear codes were represented by
generator matrices, and their weight hierarchies were deduced by application of the theory of finite projective geometry. When $a\neq 0$ and $q$ is an odd prime, Li and Li~\cite{Li2021,Li2022} used a different method to determine the weight hierarchy of the linear code $\C_{D_f}$ for $f$ being non-degenerate and degenerate, respectively.
Continuing the work of \cite{Li2021} and \cite{Li2022}, this paper further discusses the weight hierarchy of the linear code $\C_{D_f}$ for $a\neq 0$ and $q$ being a power of $2$. By Lemma~\ref{lem:GHW}, the key to obtaining the $r$-th generalized Hamming weight of $\C_{D_f}$ is to determine the maximum value of $|D_f\cap H_r^\perp|$ for a quadratic form $f$ and any $r$-dimensional subspace $H_r$ of $\bF_{q^m}$. It has been shown that $|D_f\cap H_r^\perp|$ is equal to the number of solutions for the quadratic equation $f|_{H_r^\perp}(x)=a$, where $f|_{H_r^\perp}$ is the restriction of $f$ to $H_r^\perp$. The rank and the stand type of the restriction $f|_{H_r^\perp}$
need to be determined for solving the equation $f|_{H_r^\perp}(x)=a$. This is different from and more difficult than the case of finite fields of odd characteristic. By application of quadratic form theory on finite fields of even characteristic and anatomization of the rank of $f|_{H_r^{\perp}}$ and type $f|_{H_r^{\perp}}$, the maximum values of $|D_f\cap H_r^{\perp}|$ are determined for all $r$-dimensional subspaces $H_r^\perp$ of $\bF_{q^m}$, where $1\leq r\leq m$. So, we obtain the weight hierarchy of the linear code $\C_{D_f}$ for any quadratic form $f$ over finite fields of even characteristic, and complement the results of~\cite{Li2021,Li2022}.

The rest of this paper is organized as follows: In Section 2, we introduce quadratic forms over a finite field $\bF_{q^m}$ of even characteristic and their restrictions to the subspaces of $\bF_{q^m}$. Section~3 determines the weight hierarchies of the binary linear codes $\C_{D_f}$ for $f$ being a non-degenerate quadratic form over $\fqm$.
In Section~4, we obtain the weight hierarchies of the binary linear codes $\C_{D_f}$ for $f$ being a degenerate quadratic form over $\fqm$. Finally, Section 5 concludes
the paper.

\section{Quadratic forms on ${\mathbb F}_{q^m}$ and their restrictions to its subspace}\label{sec2}

In this section, we recall some definitions and properties of quadratic forms over finite fields of even characteristic and their restrictions to subspaces.
Reader refer to~\cite{Hou2018,Li2021,Wan2003} for more details.

From now on, let $q$ be a power of $2$ and $\fqm$ be a finite field of $q^m$ elements. A polynomial $F(x)\in \fqm[x]$ with the following shape,
 \[ F(x)=\sum_{i,j=0}^{m-1} a_{ij} x^{q^i+q^j}, \,\, a_{ij}\in \fqm ,\]
is called a Dembowski-Ostrom (DO) polynomial~\cite{DembowskiOstrom1968}. It is clear that $F(x)$ is also a homogeneous quadratic polynomial. A function $Q(x_1, x_2, \ldots, x_m)$ from $\bF_{q}^m$ to $\bF_q$ is called a quadratic form
if it is a homogenous polynomial of degree two as follows:
\[Q(x_1, x_2, \cdots, x_m) = \sum_{1\leq i\leq j\leq m} a_{ij} x_ix_j , \,\, \, a_{ij}\in \bF_{q}.\]
Let ${\rm Tr}(\cdot)$ denote the trace function from $\bF_{q^m}$ to $\bF_q$. We fix a basis $\Omega=\{\varepsilon_1, \varepsilon_2, \cdots, \varepsilon_m\}\ {\rm of}\ \fqm $ over $\mathbb{F}_q$ and identify
$x =\sum_{i=1}^m x_i\varepsilon_i\in \bF_{q^m}$ with the vector ${\bf x} =(x_1,x_2,\cdots,x_m)\in \bF_q^m$, then ${\rm Tr} ( F(x))$ is
a quadratic form in the coordinates of $\bF_q^m$. Moreover, every quadratic form $f(x)$ from $\fqm$ to $\bF_q$ can be represented as
\[ f(x) = {\rm Tr} \left(F(x)\right), \]
where $F(x)$ is a DO polynomial defined above.

Let $f$ be a quadratic form on $\fqm$ and $\ell_f$ be the symmetric bilinear form on $\fqm$ associated with $f$ as follows:
$$\ell_f(x,y)=f(x+y)-f(x)-f(y), \,\,\, x, y \in \fqm .$$
By the property $f(ax)=a^2 f(x)$ for all $x\in \fqm$, where $a\in \bF_q$, and the property of the symmetric bilinear forms,  we have that
\begin{equation}\label{eq:quadraticformf}
f(x)=\sum_{i=1}^m f(\varepsilon_i) x_i^2+\sum_{1\leq i<j\leq m} \ell_f(\varepsilon_i,\varepsilon_j) x_i x_j= {\bf x}M_f(\Omega) {\bf x}^{T},
\end{equation}
where ${\bf x} =(x_1, x_2, \cdots, x_m)\in {\mathbb F}_q^m$ and
\[ M_f(\Omega) =\left( \begin{array}{cccc}
f(\varepsilon_1) & \ell_f(\varepsilon_1, \varepsilon_2) & \cdots &  \ell_f(\varepsilon_1, \varepsilon_m) \\
0 &  f(\varepsilon_2) & \cdots & \ell_f(\varepsilon_2, \varepsilon_m) \\
 \vdots & \vdots & \ddots & \vdots\\
  0 & 0 & \cdots & f(\varepsilon_m) \\
  \end{array}
\right)\]
is called the matrix of quadratic form $f$ with respect to the basis $\Omega$.
Recall that the kernel of $\ell_f$ is defined to be
\[
 {\rm ker} \, \ell_f =\left\{ x\in \fqm \,:\, \ell_f(x,y)=0\ {\rm for \,\, all}\, \, y\in \fqm \right\} = \left\{ {\bf x} \in \bF_q^m \, :\, {\bf x} \left( M_f(\Omega)+M_f^{T}(\Omega)\right)=0 \right\},
\]
and the kernel of $f$ is defined to be
\begin{align*}
	{\rm ker}\, f &=\{x\in \fqm \,:\, f(x+y)=f(y)\ {\rm for\,\, all}\,\, y\in \fqm \}\\
	&=\{x\in \fqm \,:\, f(x)=0\ {\rm and}\ l_f(x,y)=0\ {\rm for\,\, all}\,\, y\in \fqm\}.
\end{align*}
The quadratic form $f$ is said to be non-degenerate if ${\rm ker}\,f =\{0\}$. Otherwise, $f$ is degenerate.  The rank of the quadratic form $f$ over $\fqm$ is defined as
\begin{equation*}\label{eq:rankoff}
{\rm rank} \, f= m- {\rm dim}_{\bF_q} ( {\rm ker}\, f).	
\end{equation*}
It is known that ${\rm ker}\, f \subseteq {\rm ker}\, \ell_f$~\cite{Hou2018}. The type of $f$ is defined as
$${\rm type}\,f = {\rm dim}_{\bF_q}({\rm ker}\, \ell_f/{\rm ker}\, f).$$
It has been shown in Lemma~6.34 of \cite{Hou2018} that ${\rm type}\,f=0$ if $f({\rm ker}\, \ell_f)=\{0\}$, otherwise, ${\rm type}\,f=1$.
The rank of $f$ can be represented as
\begin{equation}\label{eq:rankf}
{\rm rank}\, f=m-{\rm dim}_{\bF_q}\,{\rm ker}\, \ell_f+{\rm type}\, f ={\rm rank}\left( M_f(\Omega)+M_f^{T}(\Omega) \right)+{\rm type}\, f.
\end{equation}
Taking a nonsingular linear transformation ${\bf x}={\bf y}P$, where ${\bf x, y} \in \bF_q^m$ and $P$ is an $m\times m$ nonsingular matrix over $\bF_q$, the quadratic form
$f(x)$ in (\ref{eq:quadraticformf}) is equivalent to the following standard types.

\begin{lem}\cite[Proposition 2.4]{Klapper1993}\label{lem:normalform}
Let $f$ be a quadratic form given in (\ref{eq:quadraticformf}) with rank $t$. If $t$ is even, then $f$ is equivalent under a change coordinates to
\begin{description}
\item[Tpye I:] \,\, $x_1x_2+x_3x_4+\cdots+x_{t-1}x_t$, or
\item[Type II:] \,\, $x_1x_2+x_3x_4+\cdots+ x_{t-3}x_{t-2}+\alpha x_{t-1}^2+x_{t-1}x_{t}+\alpha x_{t}^2,\,\,\alpha\in \bF_q\setminus \left\{ x^2+x \,: \, x\in \bF_q \right\}$.
\end{description}
If $t$ is odd, then $f$ is equivalent under a change coordinates to
\begin{description}
\item[Type III:]\,\, $x_1x_2+x_3x_4+\cdots+x_{t-2}x_{t-1}+x_t^2$.
\end{description}
For $a\in \bF_q$, let $\delta(a)=-1$ if $a\neq 0$ and $\delta(0)=q-1$. The number of solutions in $\fqm$ to the equation $f(x)=a$ is
\begin{description}
\item[Tpye I:] \,\,$q^{m-1}+\delta(a)q^{m-\frac{t+2}{2}}$;
\item[Type II:] \,\, $q^{m-1}-\delta(a)q^{m-\frac{t+2}{2}}$;
\item[Type III:]\,\, $q^{m-1}$.
\end{description}
\end{lem}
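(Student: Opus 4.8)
The plan is to prove the statement in two essentially independent stages: (i) reduce $f$ to one of the three normal forms by the structure theory of quadratic forms in characteristic two, and (ii) count the solutions of $f(x)=a$ for each normal form by an additive–character argument. Throughout I write $\psi(z)=(-1)^{\Tr(z)}$ for the canonical additive character of $\fq$ (here $\Tr$ denoting the absolute trace to $\bF_2$, not the map $\fqm\to\fq$ used elsewhere), and I abbreviate $\wp(\fq)=\{x^2+x:x\in\fq\}$, which is the kernel of this trace and a subgroup of index two in $(\fq,+)$.

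For stage (i) I would exploit that $\ell_f$ is alternating, since $\ell_f(x,x)=f(2x)-2f(x)=0$ in characteristic two. After factoring out $\ker f$ (these coordinates appear in no normal form and account for the $m-t$ free variables), I may assume $f$ is non-degenerate of rank $t$ on a $t$-dimensional space. A symplectic reduction of $\ell_f$ splits the space into mutually orthogonal hyperbolic planes; on a plane $\langle e,e'\rangle$ with $\ell_f(e,e')=1$ the form reads $f(e)\,x^2+xy+f(e')\,y^2$, whose anisotropy is governed by the Arf-type invariant $f(e)f(e')\bmod\wp(\fq)$. When $t$ is even every plane normalizes to $xy$ unless the total invariant is nontrivial, in which case the anisotropy can be concentrated in a single plane $\alpha x^2+xy+\alpha y^2$; this plane is anisotropic precisely when $\alpha\notin\wp(\fq)$, i.e. $\Tr(\alpha)=1$, giving Types I and II. When $t$ is odd I would use that a finite field has $u$-invariant two, so any form of rank $\geq 3$ is isotropic and repeatedly splits off hyperbolic planes until a single rank-one piece $cx^2=(c^{1/2}x)^2\cong x^2$ remains; hence the even complement is necessarily fully hyperbolic and only Type III occurs.

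For stage (ii) I would compute the solution count
\[
N(a)=\frac{1}{q}\sum_{b\in\fq}\sum_{x}\psi\bigl(b(f(x)-a)\bigr)=q^{m-1}+\frac{1}{q}\sum_{b\neq0}\psi(-ab)\,W_f(b),\qquad W_f(b)=\sum_{x}\psi(bf(x)),
\]
and use that $W_f(b)$ factors over the orthogonal blocks of the normal form, with each absent variable contributing a factor $q$. The computation reduces to three elementary block sums, valid for $b\neq0$: for a square $\sum_{x}\psi(bx^2)=0$ because $x\mapsto x^2$ is a bijection; for a hyperbolic plane $\sum_{x_1,x_2}\psi(bx_1x_2)=q$; and for the anisotropic plane $\sum_{x_1,x_2}\psi\bigl(b(\alpha x_1^2+x_1x_2+\alpha x_2^2)\bigr)=q\,\psi(\alpha^2)=-q$, where the last equality uses $\Tr(cx^2)=\Tr(c^{1/2}x)$ to evaluate the inner sum over $x_2$ and then the normalization $\Tr(\alpha)=1$. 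Multiplying the block contributions gives, for $b\neq0$, $W_f(b)=q^{m-t/2}$ in Type I, $W_f(b)=-q^{m-t/2}$ in Type II, and $W_f(b)=0$ in Type III; substituting into $N(a)$ together with $\sum_{b\neq0}\psi(-ab)=\delta(a)$ yields the three claimed counts.

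The structural stage (i) is where I expect the real work to lie: showing that the Arf-type invariant is a \emph{complete} invariant for even rank — so that Types I and II are the only possibilities and are genuinely inequivalent — and that no anisotropic plane can survive in the odd-rank case. Once the normal form is fixed, stage (ii) is a short character-sum calculation whose only subtle point is the sign $-1$ for the anisotropic block; this sign is exactly what separates the solution counts of Types I and II, and it traces back to the condition $\alpha\notin\wp(\fq)$.
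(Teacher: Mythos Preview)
Your outline is correct and complete enough to be turned into a full proof. However, note that the paper does \emph{not} prove this lemma at all: it is quoted verbatim as Proposition~2.4 of Klapper (1993) and used as a black box, so there is no ``paper's own proof'' to compare against. What you have supplied is the standard classification argument (alternating bilinear form, symplectic decomposition into planes, Arf-type invariant modulo $\wp(\fq)$ to distinguish Types~I and~II, and $u$-invariant $2$ to force full hyperbolicity of the even part in odd rank), followed by the usual additive-character count. Both the block evaluations $q$, $-q$, $0$ and the final assembly via $\sum_{b\neq 0}\psi(-ab)=\delta(a)$ check out; the only point you flag as needing care---that the Arf invariant is a \emph{complete} invariant in even rank, equivalently that the orthogonal sum of two anisotropic planes becomes hyperbolic---is indeed the substantive step, but it is classical (e.g.\ Wan, \emph{Lectures on Finite Fields and Galois Rings}, Chapter~11, which the paper already cites).
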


It is known that the rank of $M_f(\Omega)+M_f^T(\Omega)$ is even. If $f$ is equivalent to Type I or Type II, then the rank of $f$ is an even number. By (\ref{eq:rankf}) we have that ${\rm type}\, f=0$. If $f$ is equivalent to Type III, then the rank of $f$ is an odd number. By (\ref{eq:rankf}) we have that ${\rm type}\, f=1$.

Next, we recall some results on the restriction of the quadratic form $f$ to a subspace of $\fqm$. Let $H$ be an $r$-dimensional subspace of $\fqm$. The restriction of $f$ to
$H$, denoted by $f|_{H}$, is a quadratic form over $H$ with at most $r$ variables. Wan in Lemma~2 of~\cite{Wan1994} proved that the size of the intersection of a preimage of a quadratic form $f$ and a subspace $H$ is equal to the number of solutions for the restricted equation $f|_H(x)=a$. Below we give the finite-field version of Lemma 2 in~\cite{Wan1994}.

\begin{prop}\cite[Lemma 2]{Wan1994}\label{lem:intersection}
Let $f$ be a quadratic form over $\bF_{q^m}$ defined in (\ref{eq:quadraticformf}) and $D_f$ be a set given in~(\ref{eq:definingset}). Let $H$ be a $d$-dimensional subspace of $\fqm$ and $f|_{H}$ be the restriction of $f$ to $H$. Then
$$|D_f \cap H|=|D_{f|_{H}}|,$$
where $D_{f|_{H}}=\left\{  y\in H \,:\, f|_H(y)=a \right\}$.
\end{prop}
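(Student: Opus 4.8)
The plan is to show that the two sets whose cardinalities are compared are in fact the same subset of $H$, so that the equality is essentially a matter of unwinding definitions. First I would expand the left-hand side: by (\ref{eq:definingset}) we have $D_f=\{x\in\bF_{q^m}:f(x)=a\}$, whence
$$D_f\cap H=\{x\in H:f(x)=a\}.$$
Thus the claim reduces to comparing $\{x\in H:f(x)=a\}$ with $D_{f|_H}=\{y\in H:f|_H(y)=a\}$.

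Next I would make the notion of the restriction $f|_H$ precise. Choosing an $\bF_q$-basis $\{\omega_1,\dots,\omega_d\}$ of $H$ gives a coordinate isomorphism $\psi:\bF_q^d\to H$, $(z_1,\dots,z_d)\mapsto\sum_{i=1}^d z_i\omega_i$. Arguing exactly as for (\ref{eq:quadraticformf}), from $f(az)=a^2f(z)$ and the bilinearity of $\ell_f$ one obtains
$$f\Big(\sum_{i=1}^d z_i\omega_i\Big)=\sum_{i=1}^d f(\omega_i)z_i^2+\sum_{1\leq i<j\leq d}\ell_f(\omega_i,\omega_j)z_iz_j,$$
so that $f\circ\psi$ is a homogeneous quadratic polynomial in the $d$ variables $z_1,\dots,z_d$. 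This is precisely the quadratic form $f|_H$, and by construction $f|_H(y)=f(y)$ for every $y\in H$.

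With this in hand the conclusion is immediate: since $f|_H(y)=f(y)$ for all $y\in H$, the conditions $f(y)=a$ and $f|_H(y)=a$ select exactly the same elements of $H$, so $\{x\in H:f(x)=a\}=D_{f|_H}$, and taking cardinalities gives $|D_f\cap H|=|D_{f|_H}|$. I expect no genuine obstacle here; the only point requiring care is checking that $f|_H$ really is a quadratic form in $d$ variables rather than a mere function on $H$, since it is this structural fact that later permits the normal-form classification of Lemma~\ref{lem:normalform} and its solution counts to be applied to $f|_H$. Everything else is a direct translation between $H$ and its coordinate space $\bF_q^d$.
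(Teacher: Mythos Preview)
Your proposal is correct and takes essentially the same approach as the paper: both fix a basis of $H$, express $f$ on $H$ in coordinates to exhibit $f|_H$ as a genuine quadratic form, and conclude that $f|_H(y)=f(y)$ for $y\in H$, so the two sets coincide. The only difference is cosmetic: the paper additionally embeds the chosen basis of $H$ into the ambient basis $\Omega$ via a change-of-basis matrix $P$ and records explicitly that the matrix of $f|_H$ is $PM_f(\Omega)P^T$, then phrases the argument as a two-way correspondence between $x\in D_f\cap H$ and its coordinate vector $\mathbf{y}$; your version skips this bookkeeping and goes straight to the set equality, which is cleaner for the present statement.
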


\begin{proof}
Let $\Omega=\{\varepsilon_1, \varepsilon_2, \cdots, \varepsilon_m\}$ be a basis of $\fqm $ over $\mathbb{F}_q$. Each element $x\in \fqm$ corresponds
one to one with its coordinate vector ${\bf x} = (x_1, x_2, \cdots,x_m)$ under this basis. Let $\{ \alpha_1, \alpha_2, \cdots, \alpha_d\}$ be a basis of $H$ over $\bF_q$, then
$\alpha_i = \sum_{j=1}^m b_{ij} \varepsilon_j$, where $b_{ij}\in \bF_q$ and $i=1, 2, \cdots, d$. For any $y\in H$, it can be represented as
\begin{equation}\label{eq:y}
y = \sum\limits_{i=1}^d y_i \alpha_i = \sum\limits_{j =1}^m \left(  \sum\limits_{i=1}^d b_{ij} y_i\right) \varepsilon_j,\,\,\, y_i \in \bF_q .
\end{equation}
By a discussion similar to (\ref{eq:quadraticformf}) we have
\[
f|_H(y) = \left( \sum\limits_{i=1}^d b_{i1} y_i, \cdots,  \sum\limits_{i=1}^d b_{im} y_i \right) M_f(\Omega) \left( \sum\limits_{i=1}^d b_{i1} y_i, \cdots,  \sum\limits_{i=1}^d b_{im} y_i \right)^{T}
= {\bf y}PM_f(\Omega)P^T {\bf y}^T, \]
where ${\bf y} = (y_1, y_2,\cdots, y_d)$, $P=({\bf b_1, b_2, \cdots, b_d})$ and ${\bf b_i} = (b_{i1}, b_{i2}, \cdots, b_{id})^T$ for $i=1, 2, \cdots, d$. This shows that
the matrix of the restricted quadratic form $f|_H$ is $PM_f(\Omega)P^T$.

For any $x\in D_f\cap H$, we have $f(x)={\bf x}M_f(\Omega){\bf x}^T=a$ and ${\bf x}={\bf y} P$ from (\ref{eq:y}). Then, we derive that ${\bf y}PM_f(\Omega)P^T {\bf y}^T =a$,
i.e., $f|_H(y) =a$. So, from an $x\in D_f\cap H$ we can obtain a $y\in D_{f|_H}$. Conversely, for any $y\in D_{f|_H}$, we have $f|_H(y)={\bf y}PM_f(\Omega)P^T {\bf y}^T =a$. Let ${\bf x}={\bf y}P=(x_1, x_2, \cdots, x_m)$, then ${\bf x}M_f(\Omega){\bf x}^T=a$ and $x=\sum_{i=1}^m x_i\varepsilon_i \in H$. So, from a $y\in D_{f|_H}$ we obtain an $x\in D_f\cap H$. This completes proof.
\end{proof}

%\begin{prop}\cite[Lemma 2]{Wan1994}\label{lem:intersection}
%	Let $f$ be a quadratic form over $\bF_{q^m}$ defined in (\ref{eq:quadraticformf}) and $D_f$ be a set given in~(\ref{eq:definingset}). Let $H$ be a $d$-dimensional subspace of $\fqm$
%and $f|_{H}$ be the restriction of $f$ to $H$. Then
%$$|D_f \cap H|=|D_{f|_{H}}|,$$
%where $D_{f|_{H}}=\left\{ {\bf y}\in {\mathbb F}_q^d \,:\, f|_{H}({\bf y})=a,\, a\in {\mathbb F}_q\right\}$.
%\end{prop}
%
%\begin{proof}
%Let $\Omega=\{\varepsilon_1, \varepsilon_2, \cdots, \varepsilon_m\}$ be a basis  of $\fqm $ over $\mathbb{F}_q$. Each element $x\in \fqm$ corresponds one to one with its coordinates ${\bf x} = (x_1, x_2, \cdots,x_m)^{\top}$ under this basis. We fix $\{{\bf v_1,v_2,\dots,v_d}\}$ as a vector basis of $H$ under the basis $\Omega$, let matrix $P=({\bf v_1,v_2,\dots,v_d})_{m\times d}$, then for $\forall \,{\bf x}=(x_1,x_2,\dots,x_m)^{\top}\in D_f\cap H$,
%we have $f({\bf x})={\bf x}^{\top}M_f{\bf x}=a$ and ${\bf x}=y_1{\bf v_1}+y_2{\bf v_2}+\cdots+y_d{\bf v_d}=P{\bf y}$, where ${\bf y}=(y_1,y_2,\dots,y_d)$. So ${\bf y}^{\top}P^{\top}M_fP{\bf y}=f|_{H}({\bf y})=a$.
%	
%
%Conversely, for $\forall \,{\bf y}=(y_1,y_2,\dots,y_d)^{\top}$ and ${\bf y}\in D_{f|_H}$, we have ${\bf y}^{\top}P^{\top}M_fP{\bf y}=a$, let ${\bf x}=P{\bf y}$, then ${\bf x}^{\top}M_f{\bf x}=a$, so $x\in D_f\cap H$ and two sets are one-to-one correspondences. Hence that two sets have the same cardinality.
%\end{proof}

Two vectors $x, y\in \fqm$ are said to be orthogonal under the quadratic form $f$, denoted by $x\perp y$, if $\ell_f(x,y)=0$. For a $d$-dimensional subspace $H\subseteq \fqm$, its dual space $H^\perp$ under the quadratic form $f$ is defined by
\[ H^{\perp}=\{x\in \fqm \, :\, \ell_f(x,y)=0\, \, {\rm for}\ {\rm all}\,\, y\in H\}. \]
Two subspaces $H$ and $W$ of $\fqm$ are said to be orthogonal, denoted by $H\perp W$, if $x\perp y$ for all $ x\in H$ and $y\in W$.
The self-orthogonal subspace $H^{\perp}_{f|_H}$ of $H$ under $f|_{H}$ is defined as
$$H^{\perp}_{f|_{H}}=\left\{ x\in H\, : \, \ell_f(x,y)=0\,\, {\rm for}\,\, {\rm all}\,\, y\in H \right\}.$$
It is obvious that $H^{\perp}_{f|_{H}}=H^{\perp}\cap H$. Similar to (\ref{eq:rankf}), the rank of $f|_H$ can be represented as follows:
\begin{equation}\label{eq:rankofR}
	{\rm rank}\, f|_{H}=d-{\rm dim}_{\bF_q}\, (H^{\perp}\cap H)+{\rm type}\, f|_{H},
\end{equation}
where ${\rm type}\, f|_{H}={\rm dim}_{\bF_q}\, ({\rm ker}\, \ell_{f|_{H}}/{\rm ker}\, f|_{H})$.

For a subspace $H\subset \bF_{q^m}$, it is known from Proposition~\ref{lem:intersection} that  $|D_f \cap H|$ is equal to the number of solutions for
the equation $f|_H(x)=a$, where $f|_H$ is the restriction of $f$ to $H$. By Lemma~\ref{lem:normalform}, the rank and the standard type of $f|_H$
need to be determined for solving the equation $f|_H(x)=a$. This is different from and more difficult than the case of finite fields of odd characteristic.
The following proposition gives a formula to calculate the intersection of the preimage of a quadratic form and a subspace of $\fqm$, which is
slightly different from Proposition~1 in \cite{Li2021}.

\begin{prop}\label{Proposition1}
Let $f$ be a quadratic form over $\fqm$ and $D_f$ be a set given in (\ref{eq:definingset}). Let $H$ be a $d$-dimensional subspace of $\fqm$ and $R={\rm rank}\, f|_{H}\, (R\leq d)$. Then
\begin{equation*}
	|D_f\cap H|=\begin{cases}
		q^{d-1}\pm q^{d-\frac{R+2}{2}}, &{\rm if}\ R\equiv 0\pmod{2},\\
		q^{d-1}, &{\rm if}\ R\equiv 1\pmod{2}.
	\end{cases}
\end{equation*}
\end{prop}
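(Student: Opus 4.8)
The plan is to reduce the computation of $|D_f\cap H|$ to the classification of the restricted quadratic form $f|_H$ and then read the answer directly from Lemma~\ref{lem:normalform}. First I would invoke Proposition~\ref{lem:intersection} to identify $|D_f\cap H|$ with $|D_{f|_H}|$, that is, with the number of solutions $y\in H$ of the restricted equation $f|_H(y)=a$. This converts the problem of intersecting a preimage of $f$ with the subspace $H$ into the intrinsic problem of counting the solutions of a single quadratic equation on the $d$-dimensional space $H$, where the machinery of rank and type developed earlier (in particular (\ref{eq:rankofR})) applies.

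Next, since $f|_H$ is a quadratic form in at most $d$ variables whose rank is $R$ by hypothesis, I would apply Lemma~\ref{lem:normalform} with the ambient dimension taken to be $d$ and the rank taken to be $R$; concretely, every $m$ in that lemma is replaced by $d$ and every $t$ by $R$. Because a nonsingular change of coordinates is a bijection on $H\cong\bF_q^d$, it preserves the number of solutions, so the counts supplied by Lemma~\ref{lem:normalform} for the standard types are exactly the counts for $f|_H$. The classification then splits according to the parity of $R$: when $R$ is odd, $f|_H$ is equivalent to Type III and the solution count is exactly $q^{d-1}$; when $R$ is even, $f|_H$ is equivalent to either Type I or Type II, with solution counts $q^{d-1}+\delta(a)q^{d-\frac{R+2}{2}}$ and $q^{d-1}-\delta(a)q^{d-\frac{R+2}{2}}$, respectively.

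Finally I would specialize to the standing hypothesis $a\neq0$ of this paper, so that $\delta(a)=-1$. Under this specialization the Type~I count becomes $q^{d-1}-q^{d-\frac{R+2}{2}}$ and the Type~II count becomes $q^{d-1}+q^{d-\frac{R+2}{2}}$; these two possibilities are precisely what the symbol $\pm$ in the statement records. Collecting the odd-rank and even-rank cases then yields the displayed formula.

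The argument is essentially a direct composition of Proposition~\ref{lem:intersection} and Lemma~\ref{lem:normalform}, so I do not anticipate a genuine obstacle. The only point requiring care is the dimension bookkeeping, namely ensuring that the formulas of Lemma~\ref{lem:normalform} are instantiated with the ambient dimension $d$ of the subspace $H$ rather than the dimension $m$ of $\fqm$, together with the explicit use of $a\neq0$ to collapse the factor $\delta(a)$ to $-1$ and thereby obtain the clean $\pm$ in the even-rank case.
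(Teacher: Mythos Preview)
Your proposal is correct and follows essentially the same approach as the paper: invoke Proposition~\ref{lem:intersection} to reduce to counting solutions of $f|_H(y)=a$ on the $d$-dimensional space $H$, then apply Lemma~\ref{lem:normalform} with ambient dimension $d$ and rank $R$ to obtain the three type-by-type counts. Your explicit remark that $a\neq 0$ forces $\delta(a)=-1$ (yielding the clean $\pm$) is a detail the paper leaves implicit, but otherwise the arguments coincide.
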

\begin{proof}
By Proposition~\ref{lem:intersection}, we know that $|D_f\cap H|= |D_{f|_{H}}|$, where
\[|D_{f|_{H}}| = \left\{ x\in H \,:\, f|_H(x) = a,\, a\in {\mathbb F}_q^*  \right\} .\]
It is known that $f|_H(x)$ is a quadratic form over $H$ with rank $R\,(R\leq d)$. By Lemma~\ref{lem:normalform}, if $R$ is even, then the quadratic form
$f|_H(x)$ is equivalent to the standard type I or II, i.e.,
$$x_1x_2+\dots+x_{R-1}x_R \,\, {\rm or}\,\, x_1x_2+\dots+\alpha x_{R-1}^2+x_{R-1}x_R+\alpha x_R^2,$$
and $|D_{f|_{H}}|= q^{d-1}\pm q^{d-\frac{R+2}{2}}$, respectively. If $R$ is odd, then the quadratic form
$f|_H(x)$ is equivalent to the standard type III, i.e.,
	$$ x_1x_2+\dots+x_{R-2}x_{R-1}+x_R^2,$$
and $|D_{f|_{H}}|= q^{d-1}$.
\end{proof}

By a similar proof of Proposition~2 in \cite{Li2021}, we have

\begin{prop}\cite[Proposition 2]{Li2021}\label{pro:orthogonal}
Let $f$ be a non-degenerate quadratic form over $\fqm$. For each $r$ with $0<2r<m$, there exists an $r$-dimensional subspace $H\subseteq \fqm$ such that $H\subseteq H^{\perp}$.
\end{prop}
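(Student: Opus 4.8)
The plan is to construct the required subspace $H$ directly, by building an ascending chain of totally isotropic subspaces one dimension at a time. The feature of even characteristic that I would exploit from the outset is that the associated bilinear form $\ell_f$ is \emph{alternating}: since $x+x=0$ in $\fqm$, we have $\ell_f(x,x)=f(x+x)-f(x)-f(x)=f(0)-2f(x)=0$ for every $x$, both terms vanishing in characteristic two. Consequently every one-dimensional subspace $\langle x\rangle$ already satisfies $\langle x\rangle\subseteq\langle x\rangle^{\perp}$, which settles the base case $r=1$ with no effort. This is exactly the point at which the argument diverges from the odd-characteristic case of \cite{Li2021}: there $\ell_f(x,x)=2f(x)$, so one must first manufacture an isotropic vector of the quadratic form itself, whereas here the line is self-orthogonal for free.

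The engine of the induction is a dimension count for the dual space. For an arbitrary subspace $H$ I would record the identity
$$ {\rm dim}_{\bF_q}\, H^{\perp}= m - {\rm dim}_{\bF_q}\, H + {\rm dim}_{\bF_q}\bigl(H\cap{\rm ker}\,\ell_f\bigr) \ \ge\ m-{\rm dim}_{\bF_q}\, H, $$
which I would obtain by computing the rank of the pairing $\ell_f$ on $\fqm\times H$ in two ways: the map $\fqm\to H^{*}$, $v\mapsto \ell_f(v,\cdot)|_H$, has kernel $H^{\perp}$, while the companion map $H\to(\fqm)^{*}$ has kernel $H\cap{\rm ker}\,\ell_f$, and the two images share the same dimension. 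In particular, if $H$ is totally isotropic of dimension $k$ with $2k<m$, then ${\rm dim}_{\bF_q}\, H^{\perp}\ge m-k>k$, so $H^{\perp}\setminus H$ is nonempty.

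The inductive step then goes as follows. Given such a totally isotropic $H$, I would pick $v\in H^{\perp}\setminus H$ and set $H'=H+\langle v\rangle$, of dimension $k+1$. To check $H'\subseteq(H')^{\perp}=H^{\perp}\cap\langle v\rangle^{\perp}$, note that every $x\in H$ satisfies $\ell_f(v,x)=0$ because $v\in H^{\perp}$, while $\ell_f(v,v)=0$ because $\ell_f$ is alternating; combined with $H\subseteq H^{\perp}$ this gives $H'\subseteq(H')^{\perp}$. Starting from a line and iterating, I can enlarge a totally isotropic subspace from dimension $k$ to $k+1$ whenever $2k<m$. Since $2r<m$ forces $2k\le 2(r-1)<m$ at every intermediate stage $k\le r-1$, the chain reaches dimension $r$, yielding the desired $H$.

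I do not anticipate a genuine obstacle; the content is essentially the structure theory of alternating forms, and the even-characteristic setting only simplifies matters by making the base case automatic. The single point demanding care is the dimension formula for $H^{\perp}$ in the presence of the radical ${\rm ker}\,\ell_f$, whose dimension equals ${\rm type}\,f\le 1$ when $f$ is non-degenerate by (\ref{eq:rankf}). Since the construction uses only the inequality ${\rm dim}_{\bF_q}\, H^{\perp}\ge m-{\rm dim}_{\bF_q}\, H$, the radical term may be discarded throughout, and in fact non-degeneracy of $f$ is never needed beyond what the statement already grants.
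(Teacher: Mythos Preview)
Your proof is correct. The paper itself supplies no argument beyond the citation to \cite{Li2021}, whose proof is written for odd characteristic and therefore must begin by producing an isotropic vector of $f$ (since there $\ell_f(x,x)=2f(x)$, so self-orthogonality forces $f(x)=0$). Your observation that $\ell_f$ is alternating in characteristic two short-circuits that step entirely: every line is automatically self-orthogonal with respect to $\ell_f$, so the induction runs on nothing more than the dimension estimate ${\rm dim}_{\bF_q}\,H^{\perp}\ge m-{\rm dim}_{\bF_q}\,H$, with no appeal to the values of the quadratic form itself. This makes your route strictly more elementary in the even-characteristic setting and, as you correctly remark, shows that the non-degeneracy hypothesis on $f$ is not actually used. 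What the citation-based approach buys, by contrast, is uniformity of exposition with the odd-characteristic papers \cite{Li2021,Li2022} that the present work is meant to complement.
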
	
%\noindent{\bf Remark:} The proof of Proposition 2 is similar to the proof in odd characteristic, we omit it.

%%%%%%%%%%%%%%%%%%%%%%%%%%%%%%%%%%%%%%%%%%%%%%%%%%%%%%%%%%%%%%%%%%%%%%%%

\section{The weight hierarchies of linear codes from non-degenerate quadratic forms on ${\mathbb F}_{q^m}$ }\label{Sec:non-degenerate}
Let $f$ be a non-degenerate quadratic form over ${\mathbb F}_{q^m}$, where $q$ is a power of $2$. In this section, by analysis of behavior of the restriction of $f$ to subspaces of $\bF_{q^m}$, we determine the weight hierarchy of the linear code
\begin{equation}\label{eq:defcode}
\C_{D_f} = \left\{  {\bf c}_x = \left( d x \right)_{d\in D_f} \,\, :\,\, x\in {\mathbb F}_{q^m} \right\},
\end{equation}
where $D_f =\left\{ x\in {\mathbb F}_{q^m} : f(x)=a,\, a\in {\mathbb F}_q^* \right\}.$

\begin{theorem}\label{thm:typei}
Let $m$ be an even number. Let $f$ be a non-degenerate quadratic form over ${\mathbb F}_{q^m}$, which is equivalent to Type I. Then the linear code $\C_{D_f}$ defined in (\ref{eq:defcode}) has the following weight hierarchy:
	 \begin{equation*}
	 	d_r\left( \C_{D_f}\right)=\begin{cases}
	 	    q^{m-1}-q^{\frac{m-2}{2}}-q^{m-2}, &{\rm if}\ r=1,\\
	 		q^{m-1}-q^{\frac{m-2}{2}}-q^{m-r-1}-q^{\frac{m-4}{2}}, &{\rm if}\ 2\leq r\leq \frac{m}{2},\\
	 		q^{m-1}-q^{\frac{m-2}{2}}-q^{m-r-1}-q^{m-r-2}, &{\rm if}\ \frac{m}{2}<r<m-1,\\
	 		q^{m-1}-q^{\frac{m-2}{2}}-1, &{\rm if}\ r=m-1,\\
	 		q^{m-1}-q^{\frac{m-2}{2}}, &{\rm if}\ r=m.
	 	\end{cases}
	 \end{equation*}
\end{theorem}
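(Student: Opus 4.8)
The plan is to run Theorem~\ref{thm:typei} through Lemma~\ref{lem:GHW}. Since $f$ is non-degenerate and equivalent to Type I with $m$ even, its rank is $m$ and, by the remark following Lemma~\ref{lem:normalform}, $\operatorname{type} f = 0$; hence by (\ref{eq:rankf}) the matrix $M_f(\Omega)+M_f^T(\Omega)$ has full rank and the associated alternating form $\ell_f$ is non-degenerate (a fact I will need for the subspace geometry). Applying Lemma~\ref{lem:normalform} to $f$ itself (Type I, rank $m$, $a\neq 0$, so $\delta(a)=-1$) gives the length $n=|D_f|=q^{m-1}-q^{\frac{m-2}{2}}$. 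As $H_r$ ranges over all $r$-dimensional subspaces, $H_r^{\perp}$ ranges over all $(m-r)$-dimensional subspaces, so Lemma~\ref{lem:GHW} reduces everything to
\[ d_r(\C_{D_f}) = n - M_{m-r}, \qquad M_d := \max\{\, |D_f \cap W| : W \in [\fqm, d]\,\}, \]
and the whole proof comes down to computing $M_d$ for $0\le d\le m-1$.

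By Proposition~\ref{Proposition1}, $|D_f\cap W|$ depends only on $R={\rm rank}\,f|_W$ and on the standard type of $f|_W$: it equals $q^{d-1}+q^{d-\frac{R+2}{2}}$ for Type II, $q^{d-1}$ for Type III, and $q^{d-1}-q^{d-\frac{R+2}{2}}$ for Type I. Thus I would seek a $W$ whose restriction is of Type II with the smallest possible rank $R$; only when no Type II restriction exists does the Type III value $q^{d-1}$ become optimal. So the crux is to determine, for each $d$, the minimal even rank at which a $d$-dimensional subspace $W$ with $f|_W$ of Type II can occur.

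For the upper bound I argue geometrically. If $f|_W$ is Type II then ${\rm type}\,f|_W=0$, so by (\ref{eq:rankofR}) the radical $W_0:=W\cap W^{\perp}$ is totally singular with $\dim W_0=s=d-R$, and the induced non-degenerate form on $W/W_0$ has minus type and dimension $R$. Since $W_0$ is totally singular, $W\subseteq W_0^{\perp}$ and $W_0^{\perp}/W_0$ is non-degenerate of plus type and dimension $m-2s$, inside which $W/W_0$ is a non-degenerate minus-type subspace of dimension $R$. Multiplicativity of the type (Arf invariant) under orthogonal decomposition forces its orthogonal complement to be non-degenerate of minus type, hence of dimension $\ge 2$, giving $R\le m-2s-2$; substituting $s=d-R$ yields $R\ge 2d-m+2$, and trivially $R\ge 2$, so every Type II restriction satisfies $R\ge\max\{2,\,2d-m+2\}$. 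Matching lower bounds come from explicit Witt constructions: writing $\fqm$ as an orthogonal sum of hyperbolic planes, I take $W=W_0\oplus E$ with $E$ anisotropic (minus) of the minimal admissible even dimension $R=\max\{2,\,2d-m+2\}$ and $W_0$ totally singular in $E^{\perp}$ of dimension $d-R$; one checks $W\cap W^{\perp}=W_0$ and that $f|_W$ is Type II of rank $R$, attaining the bound. This gives $M_d=q^{d-1}+q^{d-2}$ for $2\le d\le m/2$ and $M_d=q^{d-1}+q^{\frac{m-4}{2}}$ for $m/2<d\le m-2$. The boundary values are exceptional: for $d=m-1$ the bound forces $R\ge m>d$, so Type II is impossible and $M_{m-1}=q^{m-2}$ is realized by a hyperplane $W=\langle v\rangle^{\perp}$ with $f(v)\neq 0$ (a Type III restriction); for $d=1$ only rank-$1$ (Type III) restrictions occur, giving $M_1=1$; and for $d=0$, $M_0=0$ since $f(0)=0\neq a$. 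Subtracting each $M_{m-r}$ from $n$ reproduces the five lines of the stated hierarchy.

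The main obstacle I anticipate is the two-sided rank/type analysis in the regime $d\ge m/2$: establishing the sharp inequality $R\ge 2d-m+2$ for Type II restrictions while simultaneously exhibiting subspaces that attain it with exactly minus (not plus) type. This demands careful Arf-invariant bookkeeping in the Witt decomposition together with a clean treatment of the exceptional hyperplane case $d=m-1$, where the elliptic part is squeezed out completely and the optimum collapses to the Type III value $q^{m-2}$.
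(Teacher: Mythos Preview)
Your argument is correct and follows the same overall strategy as the paper---reduce via Lemma~\ref{lem:GHW} and Proposition~\ref{Proposition1} to determining, for each dimension $d=m-r$, the smallest rank $R$ at which a Type~II restriction $f|_W$ can occur, and handle the boundary dimensions separately---but the mechanism you use to pin down that minimal $R$ is genuinely different. The paper works with explicit bases and matrix congruences: it writes down the matrix $M_f$ in a basis adapted to $H_r\subset H_r^\perp$, shows by a block computation that $R=m-2r$ forces Type~I, and then builds a second basis exhibiting a Type~II restriction with $R=m-2r+2$. You instead invoke the additivity of the Arf invariant under orthogonal sums: passing to $W_0^\perp/W_0$ (a plus-type space of dimension $m-2s$), a minus-type non-degenerate subspace $W/W_0$ must have a minus-type complement of dimension $\ge 2$, whence $R\le m-2s-2$ and so $R\ge 2d-m+2$. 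This is cleaner and bypasses all the matrix work; the price is that it presupposes the Witt/Arf machinery for quadratic forms in characteristic~2, which the paper avoids in favour of self-contained linear algebra. One small slip: in your attainment construction for $d>m/2$ you call $E$ ``anisotropic'', but for $R>2$ a minus-type form of dimension $R$ has Witt index $R/2-1>0$; you mean $E$ non-degenerate of minus type, and with that correction the construction (and the check that $W\cap W^\perp=W_0$) goes through exactly as you outline.
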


\begin{proof}
Since $f$ is a non-degenerate quadratic form over ${\mathbb F}_{q^m}$, which is equivalent to Type I, by Lemma~\ref{lem:normalform} we have that
$$n= | \left\{ x\in {\mathbb F}_{q^m} \,|\, f(x) =a,\,a\in {\mathbb F}_q^*  \right\}| = q^{m-1}-q^{\frac{m-2}{2}}.$$
By Lemma \ref{lem:GHW}, the weight hierarchy of $\C_{D_f}$ is
$$ d_r(\C_{D_f})=n-{\rm max} \left\{ |D_f \cap H_r^\perp| \, \,: \,\, H_r\in [{\mathbb F}_{q^m},r] \right\}. $$
So, the next task is to determine the maximum value of $|D_f\cap H_r^{\perp}|$ for all $r$-dimension subspaces of $\bF_{q^m}$. By Proposition \ref{Proposition1}, we know that
\begin{equation}\label{eq:objvalue}
| D_f \cap H_r^{\perp}|=
\begin{cases}
	q^{m-r-1}\pm q^{m-r-\frac{R+2}{2}}, & {\rm if}\ R\equiv0\pmod{2},\\
	q^{m-r-1}, & {\rm if}\ R\equiv1\pmod{2},
\end{cases}
\end{equation}
where $R={\rm rank} \ f|_{H_r^{\perp}}$. By the equation (\ref{eq:rankofR}),
\begin{equation}\label{eq:R}
R={\rm rank} \ f|_{H_r^{\perp}}=m-r-{\rm dim}_{\bF_q}(H_{r}\cap H_{r}^{\bot})+ {\rm type}\ f|_{H_r^{\perp}},
\end{equation}
where $R$ is even when ${\rm type}\ f|_{H_r^{\perp}}=0$ and $R$ is odd when ${\rm type}\ f|_{H_r^{\perp}}=1$. From (\ref{eq:objvalue}) we know that the maximum value of $|D_f\cap H_r^{\perp}|$ is $q^{m-r-1}+ q^{m-r-\frac{R+2}{2}}$ with possible minimum value of $R$. % that is to say $dim(H_{m-r}\cap H_{m-r}^{\perp})$ reach its maximum value.

By Proposition~\ref{lem:intersection}, we know that $|D_f\cap H_r^{\perp}| = |D_{f|_{H_r^{\perp}}}|$. Let $f|_{H_r^{\perp}}$ denote the restriction of $f$ to
the subspace $H_r^\perp$, which is a quadratic form over $H_r^\perp$ at most $m-r$ variables. If $|D_f\cap H_r^{\perp}|$ is equal to $q^{m-r-1}+ q^{m-r-\frac{R+2}{2}}$, by Lemma~\ref{lem:normalform},
then $f|_{H_r^{\perp}}(x)$ must be equivalent to a quadratic form of Type II.

Then, we will determine the minimum value of $R$ when $f|_{H_r^{\perp}}$ is equivalent to a quadratic form over $H_r^\perp$ of Type II.
By Proposition~\ref{pro:orthogonal} and (\ref{eq:R}), we know that $R$ reaches its minimum value when  $H_{r}\subseteq H_{r}^{\perp}$ or $H_{r}^{\perp}\subseteq H_{r}$ i.e.,
$r\leq m-r$ or $m-r\leq r$. These two cases are discussed below.

 {\bf (1)} $1\leq r\leq \frac{m}{2}$. It is known that $ 0\leq {\rm dim}_{\bF_q}\,(H_{r}\cap H_{r}^{\perp} )\leq r$. From (\ref{eq:R}) we have
\begin{equation}\label{neq:R}
	m-2r+{\rm type}\ f|_{H_r^{\perp}}\leq R\leq m-r+ {\rm type}\ f|_{H_r^{\perp}}.
\end{equation}
By (\ref{neq:R}), the minimum value of $R$ is $m-2r$ when type\, $f|_{H_r^{\perp}}=0$. Next, we will show that $f|_{H_r^{\perp}}$ can only be equivalent to Type I when $R=m-2r$.

By Proposition~\ref{pro:orthogonal}, we can construct an $r$-dimensional subspace $H_r$ of $\fqm$ and its dual space as follows:
$$H_{r}=\langle \beta_1,\beta_2,\dots,\beta_r\rangle,\quad H_{r}^\perp =H_{m-r}=\langle\alpha_1,\alpha_2,\dots,\alpha_{m-2r},\beta_1,\beta_2,\dots,\beta_r\rangle.$$
Since ${\rm dim}_{\bF_q}(H_r\cap H_r^\perp) =r$ we have rank\ $f|_{H_r^\perp}= m-2r$ by (\ref{eq:R}). From $H_r$ and its dual space $H_{m-r}$ we set
$$ H_{m-2r}=\langle \alpha_1,\alpha_2,\dots,\alpha_{m-2r}\rangle,\quad H_{m-2r}^{\perp}=\langle\beta_1,\beta_2,\dots,\beta_r,\beta_{r+1},\dots,\beta_{2r}\rangle.$$
It is clear that $\B_1=\{\alpha_1,\alpha_2,\dots,\alpha_{m-2r}, \beta_1,\beta_2,\dots,\beta_{2r}\}$ is a basis of $\fqm$ over $\bF_q$. Under this basis, the matrix of the quadratic form $f$ is
$$ M_f(\B_1)=\left(\begin{array}{cccc}
		f(\alpha_1) & \ell_f(\alpha_1,\alpha_2) & \cdots & \ell_f(\alpha_1,\beta_{2r})\\
		 & f(\alpha_2) &  & \vdots \\
		 & & \ddots & \vdots \\
		 & & & f(\beta_{2r})
	\end{array}\right).$$
Since rank $f|_{H_r^\perp}= m-2r$ is even, we know that type\ $f|_{H_r^\perp}=0$. This implies that ker\ $\ell_{f|_{H_r^\perp}}$=ker\ $f|_{H_r^\perp}$.
So,
$$ f(\beta_i)=\ell_f(\beta_i,\beta_i)=0, \,\,  1\leq i\leq r .$$
Therefore, $M_f(\B_1)$ has the following form,
$$ M_f(\B_1)=\left( \begin{array}{ccc}
		M_{m-2r} & 0 & 0 \\
		0 & 0_{r}  & K_r \\
		0 & 0  & M_r
	\end{array}
\right),$$
where $M_{m-2r}$ is an upper triangular matrix with order $m-2r$. Since $f$ is non-degenerate, $M_f(\B_1)+M_f^T(\B_1)$ is nonsingular. This implies that $K_r$ is a nonsingular matrix of order $r$.
It is easy to verify that
\begin{equation}\label{matrixf}
G^{T} M_f(\B_1)G=\left(
	\begin{array}{ccc}
		M_{m-2r} & 0 & 0 \\
		0 & 0_r  & I_r \\
		0 & 0  & 0_r
	\end{array}
	\right), \,\,{\rm where}\,\,
G=\left(
	\begin{array}{ccc}
		I_{m-2r} & 0 & 0 \\
		0 & I_r & (K_r^{-1})^{T}M_r^{T}K_r^{-1}\\
		0 & 0 & K_r^{-1}
	\end{array}
	\right).
\end{equation}
According to the representations of $H_r^\perp$ and $H_{m-2r}$, we know that the restrictions of $f$ to $H_r^\perp$ and $H_{m-2r}$ have the same canonical representation.
Assume that $f|_{H_r^{\perp}}$ is equivalent to a quadratic form on $H_r^{\perp}$ of Type II, and then $f|_{H_{m-2r}}$ is also equivalent to a quadratic form on $H_{m-2r}$ of Type II. Hence, from (\ref{matrixf}) we
derive that $f$ is equivalent to a quadratic form on ${\mathbb F}_{q^m}$ of Type II. This is a contradiction to that $f$ is equivalent to Type I.

Next, we construct a subspace $H_r^{\perp}$ such that $f|_{H_r^{\perp}}$ is equivalent to a quadratic form on $H_r^\perp$ of Type II and $R=$ rank $f|_{H_r^\perp} =m-2r+2$. By Proposition~\ref{pro:orthogonal}, there exist subspaces $H_{r-2}, H_{r-2}^\perp \subset \bF_{q^m}$ as follows:
$$H_{r-2}=\langle\beta_1,\dots,\beta_{r-2}\rangle, \quad H_{r-2}^{\perp}= H_{m-r+2}=\langle\beta_1,\beta_2,\dots,\beta_{r-2},\alpha_1,\alpha_2,\dots,\alpha_{m-2r+4}\rangle.$$
From $H_{r-2}$ we construct two subspaces $H_r$ and $H_r^\perp$ as follows:
$$H_{r}=\langle\beta_1,\dots,\beta_{r-2},\xi_1,\xi_2\rangle, \quad H_r^\perp = H_{m-r}=\langle\delta_1,\delta_2,\dots,\delta_{m-2r+2},\beta_1,\dots,\beta_{r-2}\rangle,$$
where $\{\xi_1,\xi_2\}\in H_{r-2}^{\perp}$.

Assume that $f|_{H_r^\perp}$ is equivalent to a quadratic form over $H_r^\perp$ of Type II. Since ${\rm dim}_{\bF_q}\,(H_r \cap H_r^\perp)=r-2$, from (\ref{eq:R}) we have
$R={\rm rank} f|_{H_{r}^{\perp}}=m-2r+2$. Next, we show that there exists a basis of $\bF_{q^m}$ over $\bF_q$ such that $f$ is equivalent to a quadratic form over
$\bF_{q^m}$ of Type I under this basis. Set
$$ H_{m-2r+2}=\langle\delta_1,\delta_2,\dots,\delta_{m-2r+2}\rangle,\quad H_{m-2r+2}^{\perp}=\langle\beta_1,\beta_2,\dots,\beta_{r-2},\gamma_1,\gamma_2,\dots,\gamma_{r-2},\xi_1,\xi_2\rangle.$$
Under the basis $\B_2=\{\delta_1,\delta_2,\dots,\delta_{m-2r+2},\beta_1,\dots,\beta_{r-2},\gamma_1,\dots,\gamma_{r-2},\xi_1,\xi_2\}$ of ${\mathbb F}_{q^m}$ over $\bF_q$, the matrix of $f$ has the following form,
$$M_f(\B_2)=\left(
	\begin{array}{cccc}	
	M_{m-2r+2} & 0 & 0 & 0\\
	 0 & 0_{r-2} & K_{r-2} & 0\\
	 0 & 0 & M_{r-2} & N\\
	 0 & 0 & 0 & M_2
	\end{array}
	\right),$$	
where $M_{m-2r+2}$ and $M_2$ are upper triangular matrices with order $m-2r+2$ and $2$, respectively. Since $f$ is non-degenerate, $M_f(\B_2)+M_f^T(\B_2)$ is non-singular. This implies that $K_{r-2}$ is a non-singular matrix of order $r-2$.
It is verified that
$$ M_f^\prime(\B_2) = G^TM_f(\B_2)G=\left(
	\begin{array}{cccc}
		M_{m-2r+2} & 0 & 0 & 0 \\
		 0 & 0_{r-2} & I_{r-2} & 0\\
		 0 & 0 & 0 & (K_{r-2}^{-1})^T N\\
		 0 & 0 & N^TK_{r-2}^{-1} & M_2
	\end{array}
	\right),$$
where $M_2$ is congruent to a matrix of the form
$$ \left(\begin{array}{cc}
		\alpha & 1\\
		0 & \alpha
	\end{array}\right), \,\, {\rm and } \,\, G=\left(
	\begin{array}{cccc}
		I_{m-2r+2} & 0 & 0 & 0\\
		0 & I_{r-2} & (K_{r-2}^{-1})^T M_{r-2}^T K_{r-2}^{-1} & (K_{r-2}^{-1})^T N\\
		0 & 0 & K_{r-2}^{-1} & 0\\
		0 & 0 & 0 & I_2
	\end{array}\right).
$$
Moreover, it is easy to show that ${\bf x} M_f^\prime(\B_2) {\bf x}^T = {\bf x} \bar{M}_f(\B_2) {\bf x}^T$, where ${\bf x}=(x_1, x_2, \cdots, x_m)$ and
$$ \bar{M}_f(\B_2)=\left(
	\begin{array}{cccc}
		M_{m-2r+2} & 0 & 0 & 0 \\
		 0 & 0_{r-2} & I_{r-2} & 0\\
		 0 & 0 & 0_{r-2} & 0\\
		 0 & 0 & 0 & M_2
	\end{array}
	\right),
$$
that is to say, the matrix of $f$ under the basis $\B_2$ is $\bar{M}_f(\B_2)$.

According to the representations of $H_r^\perp$ and $H_{m-2r+2}$, we know that the restrictions of $f$ to $H_r^\perp$ and $H_{m-2r+2}$ have the same canonical representation.
Since $f|_{H_r^{\perp}}(x)$ is equivalent to a quadratic form on $H_r^{\perp}$ of Type II, $f|_{H_{m-2r+2}}(x)$ is also equivalent to a quadratic form on $H_{m-2r+2}$ of Type II.
So, $M_{m-2r+2}$ is a the matrix of Type II.  By Lemma 11.17 in~\cite{Wan2003}, we derive that $f$ is equivalent to a quadratic form on ${\mathbb F}_{q^m}$ of Type I.
Hence, $R$ can reach $m-2r+2$ when $f|_{H_r^{\perp}}$ is equivalent to the quadratic form on $H_r^\perp$ of Type II. In this case, from (\ref{eq:R}) we know that
${\rm dim}_{\bF_q}\,(H_{r}\cap H_{r}^{\perp})=m-r-R=r-2$. Therefore, the cases $r=1$ and $2\leq r\leq m/2$ need to be discussed separately.

{\bf Case 1:} $r=1$. From (\ref{neq:R}), the possible minimum values of $R$ are $m-2$ and $m-1$. According to analysis above, we know that $f|_{H_1^{\perp}}$ is equivalent to a quadratic form on $H_1^\perp$ of Type I when $R=m-2$, and $f|_{H_1^{\perp}}$ is equivalent to the quadratic form on $H_1^\perp$ of Type III when $R=m-1$.
By~(\ref{eq:objvalue}), we know that $|D_f\cap H_1^{\perp}|$ is maximized when $R=m-1$. Next, we show that there exists a basis under which
$f$ is equivalent to Type I and its restriction to $H_1^\perp$ is equivalent to Type III.

By Proposition \ref{pro:orthogonal}, there exists a $1$-dimensional subspace $H_1$ of ${\mathbb F}_{q^m}$ and its dual space as follows:
		$$H_1=\langle\alpha_{m-1}\rangle,\quad H_1^{\perp}=H_{m-1}=\langle\alpha_1,\alpha_2,\dots,\alpha_{m-1}\rangle.$$
So, $R={\rm rank}\, f|_{H_{1}^{\perp}}=m-1-({\rm dim}\,(H_{1}\cap H_{1}^{\perp})-{\rm type}\, f|_{H_{1}^{\perp}})=m-1$. From $1$-dimensional space $H_1$ and its dual space we construct a $2$-dimensional space $H_2$ and its dual space as follows:
$$ H_2=\langle\alpha_{m-1},\gamma\rangle, \,\,  H_2^\perp = H_{m-2}=\langle\alpha_1,\alpha_2,\dots,\alpha_{m-2}\rangle.$$
It is clear that $f|_{H_2^\perp}$ is a non-degenerate quadratic form on $H_2^\perp$. Then, ${\rm dim}_{\bF_q} (H_2\cap H_2^\perp) =0$. So,
$\B_3=\{\alpha_1,\alpha_2,\dots,\alpha_{m-2},\alpha_{m-1},\gamma\}$ is a basis of ${\mathbb F}_{q^m}$ over ${\mathbb F}_q$. Since $R=m-1$ is odd, the
quadratic form $f|_{H_1^{\perp}}(x)$ has the canonical form of Type III and its corresponding matrix has the following form:
$$ M_{m-1}=\left(
	\begin{array}{cc}
		M_{m-2} & 0 \\
		0 & 1
	\end{array}\right),$$
where $M_{m-2}$ is a matrix of a quadratic form of Type I. For the sake of convenience, we can assume that $f(\gamma)=\alpha=\rho^2 + \rho$, where $\rho\in\bF_q^*$ and $\ell_f(\alpha_{m-1},\gamma)=1$. Under the basis $\B_3$, the matrix of quadratic form $f$ is
$$ M_f(\B_3)=\left(
	\begin{array}{ccc}
		M_{m-2} & 0 & 0\\
		0 & 1 & 1\\
		0 & 0 & \alpha
	\end{array}\right).$$
It is easy to verify that
$$M_f^\prime(\B_3)=G^T M_f(\B_3)G=\left(
	\begin{array}{ccc}
		M_{m-2} & 0 & 0\\
		0 & 0 & 1+\alpha+\rho^{-2}\alpha^2\\
		0 & \alpha+\rho^{-2}\alpha^2 & 0
	\end{array}\right), $$
where
$$G=\left(
	\begin{array}{ccc}
		I_{m-2} & 0 & 0\\
		0 & \rho & \rho^{-1}\alpha\\
		0 & 1 & \rho^{-2}\alpha+\rho^{-1}
	\end{array}\right). $$
Moreover, it is clear that ${\bf x} M_f^\prime(\B_3) {\bf x}^T = {\bf x} \bar{M}_f(\B_3) {\bf x}^T$, where ${\bf x}=(x_1, x_2, \cdots, x_m)$ and
$$ \bar{M}_f(\B_3)=\left(
	\begin{array}{ccc}
		M_{m-2} & 0 & 0\\
		0 & 0 & 1\\
		0 & 0 & 0
	\end{array}\right).$$
We have shown that $f$ is equivalent to a quadratic form on ${\mathbb F}_{q^m}$ of Type I, and its restriction to $H_1^\perp$ is equivalent to a quadratic form of Type III.
In this case, $|D_f\cap H_1^{\perp}|= q^{m-r-1}$ and
$$ d_1(\C_{D_f})=n- {\rm max} \left\{ |D_f\cap H_1^{\perp}| \,:\, H_1\in [\fqm, 1] \right\} = q^{m-1}-q^{\frac{m-2}{2}}-q^{m-2}.$$

{\bf Case 2:}\, $2\leq r\leq \frac{m}{2}$. It is shown above that $|D_f \cap H_r^{\perp}|$ is maximized when the quadratic form $f|_{H_r^{\perp}}$ is equivalent to Type II and $R=m-2r+2$. So, by Lemma~\ref{lem:normalform}, the number of solutions for the equation $f|_{H_r^{\perp}}(x)=a$ is $q^{m-r-1}+q^{m-r-\frac{R+2}{2}}=q^{m-r-1}+q^{\frac{m-4}{2}}$. Hence,
$$ d_r(\C_{D_f})=n-{\rm max}\left\{ |D_f \cap H_r^\perp| \, :\, H_r\in [{\mathbb F}_{q^m},r] \right\}=q^{m-1}-q^{\frac{m-2}{2}}-q^{m-r-1}-q^{\frac{m-4}{2}}. $$

{\bf (2)} $\frac{m}{2}<r<m$. In this case, $0\leq {\rm dim}_{\bF_q}\,(H_{r}\cap H_{r}^{\perp})\leq m-r$. From (\ref{eq:R}) we have
\begin{equation}\label{neq:R2}
	{\rm type}\, f|_{H_r^{\perp}}\leq R\leq m-r+{\rm type}\, f|_{H_r^{\perp}}.
\end{equation}
From (\ref{eq:objvalue}) we know that $|D_f \cap H_r^\perp|$ is maximized when $f|_{H_r^\perp}$ is equivalent to Type II and $R$ is the smallest possible even number.

From (\ref{neq:R2}), the least even number that $R$ can take is 0. In this case, type\ $f|_{H_r^{\perp}}=0$ and rank $f|_{H_r^{\perp}}=0$. So,
the number of solutions of $f|_{H_r^{\perp}}(x)=a$ for $a\in \bF_q^*$ is $0$, i.e., $|D_f \cap H_r^\perp| = 0$. The second-to-last smallest even number desirable for $R$
is $2$. In this case, from (\ref{eq:R}) we have ${\rm dim}_{\bF_q}\,(H_{r}\cap H_{r}^{\perp})=m-2-r$. When $r\leq m-2$, by a discussion similar to the case $1 \leq r \leq \frac{m}{2}$ above,
we can construct a subspace $H_r$ such that the quadratic form $f|_{H_r^{\perp}}(x)$ on $H_r^\perp$ is equivalent to Type II. The following cases are discussed.

{\bf Case 1:} $\frac{m}{2}<r\leq m-2$. Since $R=2$, the maximum value of $|D_f\cap H_r^\perp|$ is $q^{m-r-1}+q^{m-r-2}$ for any $H_r\in [\fqm, r]$. So,
$$ d_r(\C_{D_f})=n- {\rm max} \left\{ |D_f\cap H_r^\perp|\, :\, H_r\in [{\mathbb F}_{q^m}, r]\right\} =q^{m-1}-q^{\frac{m-2}{2}}-q^{m-r-1}-q^{m-r-2}.$$

{\bf Case 2:} $r=m-1$. Let $H_{m-1}$ be a subspace of $\fqm$ with dimension $m-1$ and its dual space $H_{m-1}^\perp$ has the dimension $1$.
From (\ref{eq:rankofR}), we have
\begin{equation}\label{eqrm-1}
 R={\rm rank}\,f|_{H_{m-1}^{\perp}}=1-{\rm dim}_{\bF_q}\,(H_{m-1}^{\perp}\cap H_{m-1})+{\rm type}\,f|_{H_{m-1}^{\perp}}.
\end{equation}
It is known that $H_{m-1}^{\perp}\cap H_{m-1}$ is the self-orthogonal subspace of $H_{m-1}$ under $f|_{H_{m-1}}$. From (\ref{eq:rankf}) we have that
$R=0$ when ${\rm type}\,f|_{H_{m-1}^{\perp}}=0$, and  $R=1$ when ${\rm type}\,f|_{H_{m-1}^{\perp}}=1$.
If $R=0$, then $|D_f\cap H_{m-1}^{\perp}|= |D_{f_{H_{m-1}^{\perp}}}| =0$ for $a\in \bF_{q^m}^*$. When $R=1$, from (\ref{eq:objvalue}) we have  $|D_f\cap H_{m-1}^\perp|=1$.
Next, we show that there exist subspaces $H_{m-1}$ and $H_{m-1}^\perp$ of $\bF_{q^m}$ such that $R={\rm rank}\,f|_{H_{m-1}^{\perp}}=1$ and $f|_{H_{m-1}^\perp}$ is equivalent to Type III, where $f$ is a quadratic form over $\bF_{q^m}$, which is equivalent to Type I. Let $M_f$ and $M_{f|_{H_{m-1}^{\perp}}}$ be the matrices of the canonical forms of $f$ and $f|_{H_{m-1}^\perp}$, respectively. From (\ref{eq:rankf}) and (\ref{eqrm-1}) we know that
rank $(M_{f|_{H_{m-1}^{\perp}}} + M_{f|_{H_{m-1}^{\perp}}}^{T})=0$. So, any vector in $\bF_{q^m}$ is self-orthogonal under $f|_{H_{m-1}^\perp}$.  Let $\beta=(b, b,0,\dots,0)$ be a self-orthogonal vector under $f|_{H_{m-1}^\perp}$ for $b\in \bF_q^*$, and
\begin{equation*}
H_{m-1}=\langle\beta,\alpha_1,\dots,\alpha_{m-2}\rangle, \,\,\, H_{m-1}^{\perp}=\langle \beta\rangle.
\end{equation*}
It is easy to verify that $\beta M_f\beta^T = b^2$. This shows that $f|_{H_{m-1}^\perp}$ is equivalent to Type III. So,
 $$ d_{m-1}(\C_{D_f})=n-{\rm max} \left\{ |D_f\cap H_{m-1}^\perp| \, :\, H_{m-1} \in [\fqm, m-1] \right\} =q^{m-1}-q^{\frac{m-2}{2}}-1.$$

{\bf Case 3:} $r=m$. From (\ref{eq:R}) we hvae $R={\rm rank}\,f|_{H_m^{\perp}}=0$. So, $|D_f\cap H_m^\perp|=0$ and
$$ d_m(\C_{D_f})=q^{m-1}-q^{\frac{m-2}{2}}.$$
\end{proof}

\begin{example}
Let $w$ be a primitive element of ${\mathbb F}_{2^4}$ and $f(x)={\rm Tr}_1^4( wx^3)$ be a quadratic form on ${\mathbb F}_{2^4}$, where ${\rm Tr}_1^4(\cdot)$ is a trace function from ${\mathbb F}_{2^4}$ to $\bF_2$. Let $\C_{D_f}$ be a linear code as in (\ref{eq:defcode}), where $D_f=\{ x\in \bF_{2^4} \, |\, f(x)=1 \}$.
By the help of Magma, we obtain the weight hierarchy of $\C_{D_f}$ as follows: $d_1=2, d_2=3, d_3=5, d_4=6$.  This result is consistent with Theorem~\ref{thm:typei}.
\end{example}

By a proof similar to Theorem~\ref{thm:typei}, we get the following theorem.

\begin{theorem}\label{thm:typeii}
Let $m$ be an even number. Let $f$ be a non-degenerate quadratic form over ${\mathbb F}_{q^m}$, which is equivalent to Type  \uppercase\expandafter{\romannumeral2}. Then the linear codes $\C_{D_f}$ defined in (\ref{eq:defcode}) has the  following weight hierarchy:
\begin{equation*}
d_r(\C_{D_f})=\begin{cases}
			q^{m-1}-q^{m-r-1}, &{\rm if}\ 1\leq r\leq \frac{m}{2}-1,\\
			q^{m-1}+q^{\frac{m-2}{2}}-q^{m-r-1}-q^{m-r-2}, &{\rm if}\ \frac{m}{2}\leq r<m-1,\\
			q^{m-1}+q^{\frac{m-2}{2}}-1, &{\rm if}\ r=m-1,\\
			q^{m-1}+q^{\frac{m-2}{2}}, &{\rm if}\ r=m.
		\end{cases}
	\end{equation*}
\end{theorem}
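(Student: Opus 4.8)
The plan is to run the same machine as in the proof of Theorem~\ref{thm:typei}, only tracking how the Type~\uppercase\expandafter{\romannumeral2} hypothesis on $f$ changes the behaviour of the restrictions $f|_{H_r^\perp}$. First I would record, via Lemma~\ref{lem:normalform} with $t=m$ and $a\in\bF_q^*$, that $n=|D_f|=q^{m-1}+q^{\frac{m-2}{2}}$, and then reduce the problem to computing $\max\{|D_f\cap H_r^\perp|:H_r\in[\fqm,r]\}$ through Lemma~\ref{lem:GHW}. By Proposition~\ref{Proposition1}, writing $R={\rm rank}\,f|_{H_r^\perp}$, we have $|D_f\cap H_r^\perp|=q^{m-r-1}\pm q^{m-r-\frac{R+2}{2}}$ when $R$ is even and $q^{m-r-1}$ when $R$ is odd; so, exactly as before, the maximum is attained by making $R$ the smallest possible \emph{even} value for which $f|_{H_r^\perp}$ is equivalent to Type~\uppercase\expandafter{\romannumeral2} (the $+$ sign). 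Using (\ref{eq:rankofR}) together with $\dim_{\bF_q}(H_r\cap H_r^\perp)\le\min\{r,m-r\}$ then delimits the admissible range of $R$ in each regime of $r$.

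The decisive structural point, and the one genuine difference from Theorem~\ref{thm:typei}, is what happens at the floor value of $R$. In the Type~\uppercase\expandafter{\romannumeral1} proof, forcing $f|_{H_r^\perp}$ to be Type~\uppercase\expandafter{\romannumeral2} at rank $m-2r$ forced $f$ itself to be Type~\uppercase\expandafter{\romannumeral2}, a contradiction, which is why the rank had to be raised to $m-2r+2$. Here the implication runs the other way: for $r\le \frac{m}{2}$ I would choose, via Proposition~\ref{pro:orthogonal}, a totally isotropic $H_r\subseteq H_r^\perp$ with $\dim_{\bF_q}(H_r\cap H_r^\perp)=r$, so that $R=m-2r$, and reproducing the congruence $G^TM_f(\B_1)G$ from (\ref{matrixf}) obtain an orthogonal splitting $f\cong f|_{H_{m-2r}}\perp(r\text{ hyperbolic planes})$. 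Since hyperbolic planes are of Type~\uppercase\expandafter{\romannumeral1}, this splitting preserves the standard type, so $f$ being Type~\uppercase\expandafter{\romannumeral2} is equivalent to $f|_{H_{m-2r}}$ (hence $f|_{H_r^\perp}$) being Type~\uppercase\expandafter{\romannumeral2}. Thus for Type~\uppercase\expandafter{\romannumeral2} the desired restriction is already realizable at the floor rank $R=m-2r$, with no need to increase $R$.

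With this in hand the case analysis is routine. For $1\le r\le\frac{m}{2}-1$ the floor $R=m-2r\ge2$ is even and realizable as Type~\uppercase\expandafter{\romannumeral2}, giving $\max|D_f\cap H_r^\perp|=q^{m-r-1}+q^{\frac{m-2}{2}}$ and hence $d_r=q^{m-1}-q^{m-r-1}$. For $\frac{m}{2}\le r<m-1$ the arithmetic floor $m-2r\le0$ only yields the identically-zero form (no solutions of $f|_{H_r^\perp}=a$), so the smallest useful even rank is $R=2$; I would then construct $H_r^\perp$ as an elliptic plane together with a totally singular radical of dimension $m-r-2$ (feasible since $m-r-2\le\frac{m}{2}-2<\frac{m}{2}-1$, the Witt index of $f$), giving $\max|D_f\cap H_r^\perp|=q^{m-r-1}+q^{m-r-2}$ and $d_r=q^{m-1}+q^{\frac{m-2}{2}}-q^{m-r-1}-q^{m-r-2}$. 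For $r=m-1$, $H_{m-1}^\perp$ is one-dimensional and $f|_{H_{m-1}^\perp}(tv)=f(v)t^2$, so choosing $v$ with $f(v)\ne0$ makes this Type~\uppercase\expandafter{\romannumeral3} of rank $1$ with exactly one solution, whence $\max=1$ and $d_{m-1}=q^{m-1}+q^{\frac{m-2}{2}}-1$; for $r=m$, $H_m^\perp=\{0\}$ forces $\max=0$ and $d_m=q^{m-1}+q^{\frac{m-2}{2}}$.

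The main obstacle I expect lies in the explicit realizability arguments: verifying through the matrix-congruence computations (as in (\ref{matrixf}) and its analogues, invoking Lemma~11.17 of~\cite{Wan2003}) that the orthogonal decomposition genuinely preserves the Type~\uppercase\expandafter{\romannumeral2} standard form, and, in the regime $r\ge\frac{m}{2}$, exhibiting a subspace $H_r$ whose dual splits as an elliptic plane plus a totally singular radical of exactly dimension $m-r-2$. Establishing that this radical exists and is correctly positioned is the delicate point, since its dimension is controlled by the Witt index $\frac{m}{2}-1$ of the Type~\uppercase\expandafter{\romannumeral2} form and must be reconciled with the constraint $\dim_{\bF_q}(H_r\cap H_r^\perp)\le m-r$.
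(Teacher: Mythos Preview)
Your proposal is correct and follows the same approach the paper intends: the paper offers no separate proof for Theorem~\ref{thm:typeii}, writing only ``By a proof similar to Theorem~\ref{thm:typei}, we get the following theorem,'' and your outline spells out precisely how that similarity plays out. The pivotal observation you isolate---that under the Type~II hypothesis the orthogonal splitting $f\cong f|_{H_{m-2r}}\perp(\text{$r$ hyperbolic planes})$ forces $f|_{H_r^\perp}$ itself to be Type~II already at the floor rank $R=m-2r$, so no bump to $R=m-2r+2$ is needed---is exactly the point that distinguishes the two theorems and explains why the first case range becomes $1\le r\le\frac{m}{2}-1$ rather than $2\le r\le\frac{m}{2}$, and why the special treatment of $r=1$ disappears. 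One small caution: Proposition~\ref{pro:orthogonal} only guarantees $H_r\subseteq H_r^\perp$ with respect to $\ell_f$, not that $f$ vanishes on $H_r$; to get ${\rm type}\,f|_{H_r^\perp}=0$ and hence $R=m-2r$ you genuinely need $H_r$ totally singular, which for a Type~II form exists precisely when $r\le\frac{m}{2}-1$ (the Witt index), consistent with your case split. Your construction in the range $\frac{m}{2}\le r<m-1$ via an anisotropic plane plus a totally singular radical inside the hyperbolic part is the right realizability argument and mirrors the paper's construction in the corresponding range of Theorem~\ref{thm:typei}.
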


\begin{example}
Let $w$ be a primitive element of ${\mathbb F}_{2^6}$ and $f(x)={\rm Tr}_1^6( wx^3)$ be a quadratic form on ${\mathbb F}_{2^6}$, where ${\rm Tr}_1^6(\cdot)$ is a trace function from ${\mathbb F}_{2^6}$ to $\bF_2$. Let $\C_{D_f}$ be a linear code as in (\ref{eq:defcode}), where $D_f=\{ x\in \bF_{2^6} \, |\, f(x)=1 \}$.
By the help of Magma, we obtain the weight hierarchy of $\C_{D_f}$ as follows: $d_1=16, d_2=24, d_3=30, d_4=33, d_5=35, d_6=36$. This result is consistent with Theorem~\ref{thm:typeii}.	
\end{example}

By a discussion similar to Theorem~\ref{thm:typei}, we obtain the following theorem.

\begin{theorem}\label{thm:typeiii}
Let $m$ be an odd number. Let $f$ be a non-degenerate quadratic form over ${\mathbb F}_{q^m}$, which is equivalent to Type III. Then the linear code $\C_{D_f}$ defined in (\ref{eq:defcode}) has the following weight hierarchy:
\begin{equation*}
		d_r(\C_{D_f})=\begin{cases}
			q^{m-1}-q^{m-r-1}-q^{\frac{m-3}{2}}, &{\rm if}\ 1\leq r\leq \frac{m-1}{2},\\
			q^{m-1}-q^{m-r-1}-q^{m-r-2}, &{\rm if}\ \frac{m+1}{2}\leq r<m-1,\\
			q^{m-1}-1, &{\rm if}\ r=m-1,\\
			q^{m-1}, &{\rm if}\ r=m.
		\end{cases}
	\end{equation*}
\end{theorem}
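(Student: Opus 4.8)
The plan is to follow exactly the strategy used in the proof of Theorem~\ref{thm:typei}, making the structural adjustment forced by $m$ being odd and $f$ being equivalent to Type III. First I would compute the length $n=|D_f|$: since $f$ is equivalent to Type III with $\mathrm{rank}\,f=m$ (odd), Lemma~\ref{lem:normalform} gives the number of solutions of $f(x)=a$ for $a\in\bF_q^*$ as $q^{m-1}$, so $n=q^{m-1}$. By Lemma~\ref{lem:GHW} the task reduces to maximizing $|D_f\cap H_r^\perp|$ over all $H_r\in[\fqm,r]$, and by Proposition~\ref{Proposition1} together with the rank formula~(\ref{eq:rankofR}), written in the form $R=\mathrm{rank}\,f|_{H_r^\perp}=m-r-\dim_{\bF_q}(H_r\cap H_r^\perp)+\mathrm{type}\,f|_{H_r^\perp}$, everything comes down to finding, for each $r$, the minimal admissible even value of $R$ for which $f|_{H_r^\perp}$ can be realized as Type II (which maximizes the count as $q^{m-r-1}+q^{m-r-(R+2)/2}$).

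The heart of the argument is the same obstruction/realization dichotomy as in Theorem~\ref{thm:typei}, but now the parity bookkeeping changes. For $1\le r\le\frac{m-1}{2}$ I would use Proposition~\ref{pro:orthogonal} to build $H_r\subseteq H_r^\perp$ with $\dim(H_r\cap H_r^\perp)=r$, giving the minimal $R=m-2r$; since $m$ is odd this minimal $R$ is odd, forcing $f|_{H_r^\perp}$ to be Type III and hence $|D_f\cap H_r^\perp|=q^{m-r-1}$ with no extra term. The next admissible value is the even $R=m-2r+1$ (achieved with $\dim(H_r\cap H_r^\perp)=r-1$), and here I would run the congruence/basis-completion argument of Theorem~\ref{thm:typei}: extend the self-orthogonal pieces to a basis $\B$ of $\fqm$, write $M_f(\B)$ in the block upper-triangular shape, perform the explicit congruence $G^TM_f(\B)G$ to clear the off-diagonal blocks, and check whether assuming $f|_{H_r^\perp}$ of Type II forces $f$ itself into the wrong global type (the contradiction step) or is consistent with Type III. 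The expected outcome is that $R=m-2r+1$ is realizable with $f|_{H_r^\perp}$ of Type II, yielding $|D_f\cap H_r^\perp|=q^{m-r-1}+q^{m-r-(m-2r+3)/2}=q^{m-r-1}+q^{\frac{m-3}{2}}$ and hence $d_r=q^{m-1}-q^{m-r-1}-q^{\frac{m-3}{2}}$.

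For $\frac{m+1}{2}\le r<m-1$ the bound becomes $0\le\dim_{\bF_q}(H_r\cap H_r^\perp)\le m-r$, so $\mathrm{type}\,f|_{H_r^\perp}\le R\le m-r+\mathrm{type}\,f|_{H_r^\perp}$; the smallest even $R$ giving a nonzero Type~II count is $R=2$ (with $\dim(H_r\cap H_r^\perp)=m-r-2$), producing $|D_f\cap H_r^\perp|=q^{m-r-1}+q^{m-r-2}$ and $d_r=q^{m-1}-q^{m-r-1}-q^{m-r-2}$; here I must also confirm, via the same block-congruence realization argument, that such an $H_r^\perp$ actually exists inside a Type~III form. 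The boundary cases $r=m-1$ and $r=m$ I would handle by hand as in Theorem~\ref{thm:typei}: for $r=m$, $R=0$ forces $|D_f\cap H_m^\perp|=0$ and $d_m=q^{m-1}$; for $r=m-1$, $\dim H_{m-1}^\perp=1$ gives $R\in\{0,1\}$, and exhibiting a non-self-orthogonal generator $\beta$ with $\beta M_f\beta^T\ne0$ realizes $R=1$ (Type III) with $|D_f\cap H_{m-1}^\perp|=1$, whence $d_{m-1}=q^{m-1}-1$. The main obstacle I anticipate is precisely the realization step for $1\le r\le\frac{m-1}{2}$: verifying through the explicit congruence transformation that a Type~II restriction on $H_r^\perp$ is compatible with a global Type~III form (rather than being obstructed as Type~II restrictions are obstructed inside a Type~I form in Theorem~\ref{thm:typei}), which is where the odd global rank and the invocation of Lemma~11.17 of~\cite{Wan2003} on gluing standard types must be applied with the correct parity.
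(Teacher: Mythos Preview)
Your proposal is correct and follows exactly the paper's approach; indeed the paper gives no separate proof of Theorem~\ref{thm:typeiii} beyond the sentence ``by a discussion similar to Theorem~\ref{thm:typei}, we obtain the following theorem.'' The one point to watch when you carry it out is that for Type~III the bilinear form $\ell_f$ has a one--dimensional radical, so the identification $(H_r^\perp)^\perp=H_r$ underlying formula~(\ref{eq:R}) is no longer literally valid; in effect, whenever the radical vector lies in your $(m{-}r)$--dimensional subspace the restriction is forced to have $\mathrm{type}=1$ (hence odd rank), which is precisely why the minimal \emph{even} rank jumps to $m-2r+1$ rather than $m-2r$, confirming the values you state.
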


\begin{example}
Let $w$ be a primitive element of ${\mathbb F}_{2^5}$ and $f(x)={\rm Tr}_1^5( wx^3)$ be a quadratic form on ${\mathbb F}_{2^5}$, where ${\rm Tr}_1^5(\cdot)$ is a trace function from ${\mathbb F}_{2^5}$ to $\bF_2$. Let $\C_{D_f}$ be a linear code as in (\ref{eq:defcode}), where $D_f=\{ x\in \bF_{2^5} \, |\, f(x)=1 \}$.
By the help of Magma, we obtain the weight hierarchy of $\C_{D_f}$ as follows: $d_1=6, d_2=10, d_3=13, d_4=15, d_5=16$. This result is consistent with Theorem~\ref{thm:typeiii}.	
\end{example}

\section{The weight hierarchies of linear codes from degenerate quadratic forms on ${\mathbb F}_{q^m}$}\label{Sec:degenerate}

In this section, we discuss the weight hierarchy of the linear code $\C_{D_f}$ defined in (\ref{eq:defcode}) for $f(x)$ being a degenerate quadratic form over ${\mathbb F}_{q^m}$.
From now on, let $f$ be a degenerate quadratic form over ${\mathbb F}_{q^m}$ and $\bar{{\mathbb F}}_{q^m}={\mathbb F}_{q^m}/{\rm ker}\, f$. It is easy to see that
the quadratic form $f$ induces a non-degenerate quadratic form $\bar{f}$ over $\bar{\bF}_{q^m}$ as follows:
\begin{equation*}
    	\begin{aligned}
    		\bar{f} \,:\, \bar{\bF}_{q^m}\, &\longrightarrow {\bF_q} \\
    		  \bar{x} \,\,\,  &\longmapsto f(x).
    	\end{aligned}
    \end{equation*}
Let $\varphi \,:\, x \mapsto \bar{x}$ be a canonical map from $\bF_{q^m}$ to $\bar{\bF}_{q^m}$. For a subspace $H\subset \bF_{q^m}$, $\varphi(H)= H/(H\cap {\rm ker}\,f)$ is
a subspace of $\bar{\bF}_{q^m}$, and denote it by $\bar{H}$. Let $\bar{f}|_{\bar{H}}$ denote the restriction of the quadratic form of $\bar{f}$ to $\bar{H}$, and
$\bar{R}$ denote the rank of $\bar{f}|_{\bar{H}}$. The dual space of $\bar{H}$ under the quadratic form $\bar{f}$ is defined as
$$\bar{H}^{\perp}=\{\bar{x}\in \bar{{\mathbb F}}_{q^m}: \ell_{\bar{f}}(\bar{x},\bar{y})=0\,\, {\rm for\,\, all}\,\, \bar{y}\in \bar{H}\},$$
and the self-orthogonal subspace of $\bar{H}$ under $\bar{f}|_{\bar{H}}$ is defined as
$${\bar{H}_{f|_{\bar{H}}}^{\perp}}=\{\bar{x}\in \bar{H}: \ell_{\bar{f}}(\bar{x},\bar{y})=0\,\, {\rm for\,\, all}\,\, \bar{y}\in \bar{H}\}.$$

For a degenerate quadratic form $f$, to calculate the $r$-th generalized Hamming weight of $\C_{D_f}$ in~(\ref{eq:defcode}), we need to find the maximum value of $|D_f\cap H_r^{\perp}|$ for all $r$-dimensional
subspace $H_r$ of ${\mathbb F}_{q^m}$, which is determined by the rank and type of $f|_{H_r^{\perp}}$ by Proposition~\ref{Proposition1}. Since $\bar{f}(\bar{x})=f(x)$ for all $x\in {\mathbb F}_{q^m}$, we know
that $f|_{H}$ and $\bar{f}|_{\bar{H}}$ have the same rank and standard type. However, Since $\bar{f}$ is a non-degenerate quadratic form on $\bar{\bF}_{q^m}$, the rank and type of $\bar{f}|_{\bar{H}}$ can be determined as we did in Theorem~\ref{thm:typei}. Hence, by a similar way to the last section, we can determine the weight hierarchy of the linear code $\C_{D_f}$ for $f$ being a degenerate
quadratic form over $\bF_{q^m}$.

%Since $\bar{f}$ is a non-degenerate quadratic form over $\bar{\bF}_{q^m}$, from (\ref{eq:rankofR}) we have
%\begin{equation}\label{eq:barR}
%\bar{R}={\rm rank}\,\bar{f}|_{\bar{H}}={\rm dim}_{\bF_q}\,(\bar{H})-{\rm dim}_{\bF_q}\,(\bar{H}\cap \bar{H}^{\perp})+{\rm type}\,\bar{f}|_{\bar{H}}.
%\end{equation}
%{\blue Moreover, ${\rm rank}\,\bar{f}|_{\bar{H}}={\rm rank}\,f|_H$  From Proposition~\ref{Proposition1} we know that $|D_f\cap H_r^{\perp}|$ is determined by the rank and Type of $f|_{H_r^{\perp}}$. By a similar analysis in Theorem~\ref{thm:typei}, we know that $|D_f\cap H_r^{\perp}|$ can reaches it maximum value when $R$ takes available minimum value. So, the problem of determining the maximum value of $|D_f\cap H_r^{\perp}|$ for any subspace $H_r$ of ${\mathbb F}_{q^m}$ can be transformed into a problem of determining ${\rm rank}\,\bar{f}|_{{\bar H}_r^{\perp}}=\bar{R}$ by $R=\bar{R}$. Since $\bar{f}$ being non-degenerate quartic form over $\bar{\mathbb F}_{q^m}$, so determining the rank $\bar{R}$ of restriction of $\bar{f}$ to subspace $\bar{H}_r^{\perp}$ can be solved as we did in Section~\ref{Sec:non-degenerate}.

\begin{theorem}\label{thm:iv}
Let $m$ be a positive integer and $f$ be a degenerate quadratic form over ${\mathbb F}_{q^m}$ with $rank\ f=2s\,(2s<m)$, which is equivalent to Type I.  Then the linear code $\C_{D_f}$ defined in (\ref{eq:defcode}) has the following weight hierarchy:
	\begin{equation*}
	 	d_r(\C_{D_f})=\begin{cases}
	 	    q^{m-1}-q^{m-s-1}-q^{m-2}, &{\rm if}\ r=1,\\
	 		q^{m-1}-q^{m-s-1}-q^{m-r-1}-q^{m-s-2}, &{\rm if}\ 2\leq r\leq s,\\
	 		q^{m-1}-q^{m-s-1}-q^{m-r-1}-q^{m-r-2}  , &{\rm if}\ s<r<m-1,\\
	 		q^{m-1}-q^{m-s-1}-1, &{\rm if}\ r=m-1,\\
	 		q^{m-1}-q^{m-s-1}, &{\rm if}\ r=m.
	 	\end{cases}
	 \end{equation*}
\end{theorem}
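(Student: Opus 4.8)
The plan is to reduce the degenerate case to the non-degenerate Type~I analysis of Theorem~\ref{thm:typei} by passing to the quotient $\bar{\bF}_{q^m}=\bF_{q^m}/{\rm ker}\, f$. Since $f$ is equivalent to Type~I with ${\rm rank}\, f=2s$, we have $\dim_{\bF_q}{\rm ker}\, f=m-2s$, and the induced form $\bar f$ is a \emph{non-degenerate} quadratic form of Type~I on the $2s$-dimensional space $\bar{\bF}_{q^m}$. First I would record the length of the code: applying Lemma~\ref{lem:normalform} to the Type~I form $f$ of rank $2s$ gives
$$n=|D_f|=q^{m-1}-q^{m-s-1},$$
which is exactly the summand common to every branch of the claimed formula. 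By Lemma~\ref{lem:GHW}, as $H_r$ runs over $[\bF_{q^m},r]$ its dual $H_r^{\perp}$ runs over all $(m-r)$-dimensional subspaces of $\bF_{q^m}$, so it suffices to evaluate $\max\bigl\{\,|D_f\cap W|\ :\ W\in[\bF_{q^m},m-r]\,\bigr\}$ and set $d_r(\C_{D_f})=n-\max$.

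The engine is the identity $\bar f(\bar x)=f(x)$, which forces $f|_{W}$ and $\bar f|_{\bar W}$ to share the same rank $R$ and the same standard type, where $\bar W=\varphi(W)$. Writing $d=\dim_{\bF_q}W=m-r$, Proposition~\ref{Proposition1} then gives $|D_f\cap W|=q^{d-1}+q^{d-\frac{R+2}{2}}$ if $f|_W$ is of Type~II, $q^{d-1}$ if it is of Type~III, and $q^{d-1}-q^{d-\frac{R+2}{2}}$ if it is of Type~I. Hence the maximum is attained either by a Type~II restriction of the smallest possible rank $R$, or, when no Type~II restriction exists, by a Type~III restriction. The one new degree of freedom compared with Theorem~\ref{thm:typei} is that
$$\bar d:=\dim_{\bF_q}\bar W=(m-r)-\dim_{\bF_q}(W\cap{\rm ker}\, f)$$
can be prescribed anywhere in $[\max(0,2s-r),\,\min(m-r,2s)]$ by choosing how much of ${\rm ker}\, f$ to place inside $W$; and since $\bar f$ is non-degenerate of rank $2s$, the ranks and types realizable by $\bar f|_{\bar W}$ for a $\bar d$-dimensional $\bar W$ are precisely those obtained in the proof of Theorem~\ref{thm:typei} with $m$ replaced by $2s$ and codimension $\rho=2s-\bar d$.

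With this dictionary the case analysis parallels Theorem~\ref{thm:typei}. For $2\le r\le s$ the lower end $2s-r$ of the range of $\bar d$ is at least $s$, so I would take $W\supseteq{\rm ker}\, f$, giving $\bar d=2s-r$; arguing as in Case~2 of Theorem~\ref{thm:typei} (now for $\bar f$), a Type~II restriction is realizable with minimal rank $R=2(2s-r)-2s+2=2s-2r+2$, whence $\max=q^{m-r-1}+q^{m-s-2}$. For $s<r<m-1$ the range of $\bar d$ meets $[2,s]$, so a rank-$2$ Type~II restriction (lifting an elliptic plane of $\bar{\bF}_{q^m}$) exists and $\max=q^{m-r-1}+q^{m-r-2}$. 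For $r=1$ the inequality $\dim_{\bF_q}(W\cap{\rm ker}\, f)\le m-2s$ forces $\bar d\in\{2s-1,2s\}$, which is exactly the codimension-$\le 1$ situation of Case~1 of Theorem~\ref{thm:typei}: no Type~II restriction is available and the optimum is the Type~III restriction at $\bar d=2s-1$, of value $q^{m-2}$. The endpoints $r=m-1$ (a line $W$ with $\bar f|_{\bar W}$ of rank~$1$, value $1$) and $r=m$ (value $0$) are immediate. Subtracting each maximum from $n$ yields the five branches.

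I expect the main obstacle to be the two lower-bound claims, for $2\le r\le s$ and for $r=1$, that the rank of a Type~II restriction cannot be pushed below the stated value. These are the analogues of the congruence computations in Theorem~\ref{thm:typei} — the matrices $M_f(\B_1)$, $M_f(\B_2)$ and the identity~(\ref{matrixf}) — which must be re-run for $\bar f$ on $\bar{\bF}_{q^m}$ to show that a Type~II restriction of smaller rank (respectively, any Type~II restriction when $\bar d\ge 2s-1$) would force $\bar f$, and hence $f$, to be globally equivalent to Type~II, contradicting the hypothesis. Once this obstruction is settled, the explicit construction of an optimal $W$ — choose a basis of ${\rm ker}\, f$ and lift a Type~II (or Type~III) witnessing subspace of $\bar{\bF}_{q^m}$ supplied by Proposition~\ref{pro:orthogonal} — is routine, and the remaining estimates are the substitutions already displayed above.
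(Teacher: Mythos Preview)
Your proposal is correct and follows essentially the same route as the paper: pass to the quotient $\bar{\bF}_{q^m}=\bF_{q^m}/\ker f$, use that $f|_{W}$ and $\bar f|_{\bar W}$ share rank and type, and invoke the non-degenerate Type~I analysis of Theorem~\ref{thm:typei} (with $m$ replaced by $2s$) to identify the optimal rank/type of the restriction in each range of $r$. The paper carries out the same reduction and even re-runs the block-matrix congruence computations (now with an extra $(m-2s)\times(m-2s)$ zero block for $\ker f$), whereas you more cleanly parametrize by $\bar d=\dim_{\bF_q}\bar W$ and its admissible range $[\max(0,2s-r),\min(m-r,2s)]$; this is a difference in exposition, not in strategy, and your identification of the lower-bound obstruction (no Type~II restriction of smaller rank, else $\bar f$ would be globally Type~II) matches the paper's exactly.
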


\begin{proof}
Since $f$ is a degenerate quadratic form over ${\mathbb F}_{q^m}$ with ${\rm rank}\,f=2s$, which is equivalent to Type \uppercase\expandafter{\romannumeral1}, by Lemma \ref{lem:normalform} we know that
$$n=|\,\{x\in {\mathbb F}_{q^m}\,|\,f(x)=a,\,a\in {\mathbb F}_q^*\}\,|=q^{m-1}-q^{m-s-1}.$$
By Lemma \ref{lem:GHW}, the $r$-th generalized Hamming weight of $\C_{D_f}$ is equal to
$$ d_r(\C_{D_f})=n-{\rm max} \left\{ |D_f \cap H_r^\perp| \, \,: \,\, H_r\in [{\mathbb F}_{q^m},r] \right\}, $$
where $[{\mathbb F}_{q^m},r]$ is the set of all $r$-dimensional subspaces of $\bF_{q^m}$. By Proposition \ref{Proposition1}, we know that
\begin{equation}\label{eq:objvalue1}
| D_f \cap H_r^{\perp}|=
\begin{cases}
	q^{m-r-1}\pm q^{m-r-\frac{R+2}{2}}, & {\rm if}\ R\equiv0\pmod{2},\\
	q^{m-r-1}, & {\rm if}\ R\equiv1\pmod{2},
\end{cases}
\end{equation}
where $R={\rm rank} \ f|_{H_r^{\perp}}$. From (\ref{eq:objvalue1}), the possible maximum value of $|D_f\cap H_r^{\perp}|$ is $q^{m-r-1}+q^{m-r-\frac{R+2}{2}}$ when $R$ is the smallest even number possible.
In this case, by Lemma \ref{lem:normalform}, $f|_{H_r^{\perp}}(x)$ is equivalent to a quadratic form over $H_r^\perp$ of Type II.

Next, we determine the minimum value of $R$ when $f|_{H_r^{\perp}}(x)$ is equivalent to a quadratic form over $H_r^\perp$ of Type II.
Let $\bar{H}_r$ denote the image of $H_r$ under the canonical map $\varphi$, i.e., $\bar{H}_r = H_r/H_r\cap {\rm ker}\, f$. Let $\bar{R}$ denote the rank of $\bar{f}|_{\bar{H}}$.
From (\ref{eq:rankofR}) we have
\begin{equation}\label{eq:rankofR4}
\bar{R}={\rm rank}\,\bar{f}|_{\bar{H}_r^{\perp}}={\rm dim}_{\bF_q}\,(\bar{H}_r^{\perp})-{\rm dim}_{\bF_q}\,(\bar{H}_{r}\cap \bar{H}_{r}^{\bot})+{\rm type}\, \bar{f}|_{\bar{H}_r^{\perp}}
\end{equation}
and $\bar{R} = R$. Since $\bar{f}$ is non-degenerate over $\bar{{\mathbb F}}_{q^m}$, by Proposition~\ref{pro:orthogonal} and (\ref{eq:rankofR4}), $\bar{R}$ reaches its minimum value when $\bar{H}_r\subseteq \bar{H}_r^{\perp}$ or $\bar{H}_r^{\perp}\subseteq \bar{H}_r$, i.e., $r\leq 2s-r$ or $2s-r\leq r$. These two cases are discussed below.

{\bf (1)} $1\leq r\leq s$. In this case, for a subspace $H_{r}\subset {\mathbb F}_{q^m}$ we have
\[\begin{split}
{\rm dim}_{\bF_q}\,(\bar{H}_{r}^{\perp})={\rm dim}_{\bF_q}\,(H_{r}^{\perp})-{\rm dim}_{\bF_q}\,(H_{r}^{\perp}\cap {\rm ker}\, f)\geq m-r-(m-2s)=2s-r, &\,\, {\rm and} \\
0\leq {\rm dim}_{\bF_q}\,(\bar{H}_{r}\cap \bar{H}_{r}^{\bot})\leq {\rm dim}_{\bF_q}\,(\bar{H}_{r})=2s-{\rm dim}_{\bF_q}\,(\bar{H}_{r}^{\perp})\leq 2s-(2s-r)=r.&
\end{split}\]
So, from (\ref{eq:rankofR4}), we know that
\begin{equation}\label{neq:barR}
\bar{R}\geq 2s-2r+{\rm type}\, \bar{f}|_{\bar{H}_r^{\perp}}.
\end{equation}
Since $\bar{f}$ is non-degenerate over $\bar{\bF}_{q^m}$, by a discussion similar to Theorem~\ref{thm:typei}, we know that the minimum desirable value for $\bar{R}$ is $2s-2r+2$, rather than $2s-2r$
when $\bar{f}|_{\bar{H}_r^{\perp}}(\bar{x})$ is equivalent to the quadratic form over $\bar{H}_r^\perp$ of Type II. Next, we construct a subspace $\bar{H}_r^{\perp}\subset \bar{\bF}_{q^m}$
such that $\bar{f}|_{\bar{H}_r^{\perp}}$ is equivalent to the quadratic form over $\bar{H}_r^\perp$ of Type II and $\bar{R}=2s-2r+2$.

%Assume that $\bar{R}=2s-2r+2$, i.e., rank $\bar{f}|_{\bar{H}_r^{\perp}}=2s-2r+2$.
Since ${\rm dim}_{\bF_q}(\bar{\bF}_{q^m}) =2s$ and $1\leq r\leq s$, by Proposition \ref{pro:orthogonal}, there exist subspaces $\bar{H}_{r-2}, \bar{H}_{r-2}^{\perp} \subset \bar{\bF}_{q^m}$ as follows:
$$\bar{H}_{r-2}=\langle\beta_1,\dots,\beta_{r-2}\rangle, \quad \bar{H}_{r-2}^{\perp}= \bar{H}_{2s-r+2}=\langle \alpha_1,\alpha_2,\dots,\alpha_{2s-2r+4},\beta_1,\beta_2,\dots,\beta_{r-2}\rangle.$$
From $\bar{H}_{r-2}$ and $\bar{H}_{r-2}^{\perp}$ we can construct two subspaces $\bar{H}_r$ and $\bar{H}_r^{\perp}$ of $\bar{\bF}_{q^m}$ as follows:
$$\bar{H}_r=\langle\beta_1,\beta_2,\dots,\beta_{r-2},\xi_1,\xi_2\rangle,\quad \bar{H}^{\perp}_{r}=\bar{H}_{2s-r}=\langle \delta_1,\delta_2,\dots,\delta_{2s-2r+2},\beta_1,\beta_2,\dots,\beta_{r-2}\rangle, $$
where $\{\xi_1,\xi_2\}\in \bar{H}_{r-2}^{\perp}$.

Assume that $\bar{f}|_{\bar{H}_r^\perp}$ is equivalent to a quadratic form over $\bar{H}_r^\perp$ of Type II. Since ${\rm dim}_{{\mathbb F}_q}(\bar{H}_{r}\cap \bar{H}_{r}^{\perp})=r-2$, from (\ref{eq:rankofR4}) we have
$\bar{R}={\rm rank}\,\bar{f}|_{\bar{H}_r^{\perp}}=2s-2r+2$. Next, we show that there exists a basis of $\bF_{q^m}$ over $\bF_q$ such that $f$ is equivalent to a quadratic form over $\bF_{q^m}$ of Type I under this basis.
Set
$$\bar{H}_{2s-2r+2}=\langle \delta_1,\delta_2,\dots,\delta_{2s-2r+2}\rangle,\quad \bar{H}_{2s-2r+2}^{\perp}=\langle\beta_1,\beta_2,\dots,\beta_{r-2},\gamma_1,\gamma_2,\dots,\gamma_{r-2},\xi_1,\xi_2\rangle,$$
and
$${\rm ker}\,f=\langle\theta_1,\theta_2,\dots,\theta_{m-2s}\rangle.$$
% {\blue   From (\ref{eq:rankofR4}) we have $\bar{R}={\rm rank}\,\bar{f}|_{\bar{H}_r^{\perp}}=2s-2r+2$. Set
%	$$\bar{H}_{2s-2r+2}=\langle \delta_1,\delta_2,\dots,\delta_{2s-2r+2}\rangle,\quad \bar{H}_{2s-2r+2}^{\perp}=\langle\beta_1,\beta_2,\dots,\beta_{r-2},\gamma_1,\gamma_2,\dots,\gamma_{r-2},\xi_1,\xi_2\rangle,$$
%	and
%	$${\rm ker}\,f=\langle\theta_1,\theta_2,\dots,\theta_{m-2s}\rangle.$$
%	 Under the basis $\C_1=\{\delta_1,\dots,\delta_{2s-2r+2},\beta_1,\dots,\beta_{r-2},\gamma_1,\dots,\gamma_{r-2},\xi_1,\xi_2,\theta_1,\dots,\theta_{m-2s}\}$ of ${\mathbb F}_{q^m}$ over ${\mathbb F}_q$, the associated matrix of $f$ has the following form,}
Under the basis $\B=\{\delta_1,\dots,\delta_{2s-2r+2},\beta_1,\dots,\beta_{r-2},\gamma_1,\dots,\gamma_{r-2},\xi_1,\xi_2,\theta_1,\dots,\theta_{m-2s}\}$ of ${\mathbb F}_{q^m}$ over ${\mathbb F}_q$, the associated matrix of the quadratic form $f$ is as follows:
$$M_f(\B)=\left(
\begin{array}{ccccc}
		M_{2s-2r+2} & 0 & 0 & 0 & 0 \\
		 0 & 0_{r-2} & K_{r-2} & 0 & 0\\
		 0 & 0 & M_{r-2} & N & 0\\
		 0 & 0 & 0 & M_2 & 0\\
		 0 & 0 & 0 & 0 & 0_{(m-2s)\times (m-2s)}
	\end{array}
\right).$$
It is verified that
$$ M_f^{\prime}(\B)=G^{T}M_f(\B)G=\left(
\begin{array}{ccccc}
		M_{2s-2r+2} & 0 & 0 & 0 & 0 \\
		 0 & 0_{r-2} & I_{r-2} & 0 & 0\\
		 0 & 0 & 0 & (K_{r-2}^{-1})^T N & 0\\
		 0 & 0 & (K_{r-2}^{-1})^T N & M_2 & 0\\
		 0 & 0 & 0 & 0 & 0_{(m-2s)\times (m-2s)}
\end{array}
\right),$$
where $M_2$ is congruent to a matrix of the form
$$ \left(\begin{array}{cc}
		\alpha & 1\\
		0 & \alpha
\end{array}\right), \,\, {\rm and } \,\,
G=\left(
\begin{array}{ccccc}
		I_{2s-2r+2} & 0 & 0 & 0 & 0\\
		0 & I_{r-2} & (K_{r-2}^{-1})^T M_{r-2}^T K_{r-2}^{-1} & (K_{r-2}^{-1})^T N & 0\\
		0 & 0 & K_{r-2}^{-1} & 0 & 0\\
		0 & 0 & 0 & I_2 & 0\\
		0 & 0 & 0 & 0 & 0_{(m-2s)\times (m-2s)}
\end{array}\right).
$$
Moreover, it is easy to show  ${\bf x} M_f^\prime(\B) {\bf x}^T = {\bf x}\bar{M}_f(\B) {\bf x}^T$, where ${\bf x}=(x_1, x_2, \cdots, x_m)$ and
$$ \bar{M}_f(\B)=\left(
\begin{array}{ccccc}
		M_{2s-2r+2} & 0 & 0 & 0 & 0\\
		 0 & 0_{r-2} & I_{r-2} & 0 & 0\\
		 0 & 0 & 0 & 0 & 0\\
		 0 & 0 & 0 & M_2 & 0\\
		 0 & 0 & 0 & 0 & 0_{(m-2s)\times (m-2s)}
\end{array}
\right),
$$
that is to say, the associated matrix of $f$ under the basis $\B$ is $\bar{M}_f(\B)$.

According to the representations of $\bar{H}_r^{\perp}$ and $\bar{H}_{2s-2r+2}$, we know that the restrictions of $\bar{f}$ to $\bar{H}_r^{\perp}$ and $\bar{H}_{2s-2r+2}$ have the same canonical representation. Since $\bar{f}|_{\bar{H}_r^{\perp}}$ is equivalent to a quadratic form on $\bar{H}_r^{\perp}$ of Type II, $\bar{f}|_{\bar{H}_{2s-2r+2}}$ is also equivalent to a quadratic form on $\bar{H}_{2s-2r+2}$ of Type II. So, $M_{2s-2r+2}$ is the associated matrix for a quadratic form of Type II.
By Lemma~11.17 in \cite{Wan2003} and the form of $\bar{M}_f(\B)$, $f$ is equivalent to a quadratic form on ${\mathbb F}_{q^m}$ of Type I.	
In this case, from (\ref{eq:rankofR4}) we know that ${\rm dim}_{\bF_q}\,(\bar{H}_{m-r}\cap \bar{H}_{m-r}^{\bot})=2s-r-\bar{R}=r-2.$
Therefore, the cases $r=1$ and $2\leq r\leq s$ need to be discussed separately.

{\bf Case 1:} $r=1$. From (\ref{neq:R}), the possible values of $\bar{R}$ are $2s-2$ and $2s-1$, i.e., there exists a subspace $H_1\subset \bF_{q^m}$ such that the possible rank of $\bar{f}|_{\bar{H}_1^\perp}$ are $2s-2$ and $2s-1$. By a similar analysis to Theorem~1, we know that $\bar{f}|_{\bar{H}_1^{\perp}}$ is equivalent to the quadratic form of
Type I when $\bar{R}=2s-2$, and Type III when $\bar{R}=2s-1$. Since $f|_{H_1^{\perp}}(x)=\bar{f}|_{\bar{H}_1^{\perp}}(\bar{x})$ for any $x\in {\mathbb F}_{q^m}$ and $R=\bar{R}$, $f|_{H_1^{\perp}}(x)$ is equivalent to Type I when $R=2s-2$ and Type III when $R=2s-1$. So, From (\ref{eq:objvalue1}) we know that $|D_f\cap H_1^{\perp}|$ is maximized when $R=2s-1$. The desirable $H_1$ can be constructed by a similar way in Theorem~\ref{thm:typei}. In this case, $|D_f\cap H_1^{\perp}|=\{x\in {\mathbb F}_{q^m}\,|\,f|_{H_1^{\perp}}(x)=a\}=q^{m-2}$. So,
$$ d_1(\C_{D_f})=n-{\rm max} \left\{ \, |D_f\cap H_1^{\perp}|\, :\, H_1\in \left[ {\mathbb F}_{q^m},1 \right] \right\} = q^{m-1}-q^{m-s-1}-q^{m-2}. $$

{\bf Case 2:} $2\leq r\leq s$. It is shown above that the minimum value of $\bar{R}$ is $2s-2r+2$ when $\bar{f}|_{\bar{H}_r^{\perp}}(\bar{x})$ is equivalent to the quadratic form of Type II.
Since $f|_{H_1^{\perp}}(x)=\bar{f}|_{\bar{H}_1^{\perp}}(\bar{x})$ for any $x\in {\mathbb F}_{q^m}$ and $R=\bar{R}$, we know that the minimum value of $R$ is $2s-2r+2$ when $f|_{H_r^{\perp}}(x)$ is equivalent to
a quadratic form of Type II. So, from (\ref{eq:objvalue1}) we have
$$d_r(\C_{D_f})=n-{\rm max} \left\{ \, |D_f\cap H_r^\perp|\, :\, H_r\in \left[ {\mathbb F}_{q^m},r \right] \right\} =q^{m-1}-q^{m-s-1}-q^{m-r-1}-q^{m-s-2}.$$

{\bf (2)} $s<r<m$. In this case, $0\leq {\rm dim}_{\bF_q}\,(\bar{H}_{r}\cap \bar{H}_{r}^{\bot})\leq {\rm dim}_{\bF_q}\,(\bar{H}_r^{\perp})$. From (\ref{eq:rankofR4}) we have
\begin{equation}\label{eq:barR2}
\bar{R}\geq {\rm type}\, \bar{f}|_{\bar{H}_r^{\perp}}.
\end{equation}
From (\ref{eq:objvalue1}) we know that $|D_f \cap H_r^\perp|$, i.e., the number of solutions for $f|_{H_r^{\perp}}(x)=a$ is maximized when $f|_{H_r^\perp}$ is equivalent to Type II and $R$ is the smallest even number possible.
It is clear that $|D_f \cap H_r^\perp|=0$ when $R=0$, i.e., $f|_{H_r^{\perp}}(x)$ is a zero polynomial. From (\ref{eq:barR2}) we know the second-to-last smallest even number of $\bar{R}$ is $2$. By a similar discussion to
Theorem~\ref{thm:typei}, we can construct a subspace $\bar{H}_r$ such that $\bar{f}|_{\bar{H}_r^{\perp}}$ is equivalent to a quadratic form of Type II and $\bar{R}=$ rank $\bar{f}|_{\bar{H}_r^{\perp}}=2$. In this case,
	$$\bar{R}=R={\rm dim}_{\bF_q}\,(H_r^{\perp})-{\rm dim}_{\bF_q}\,(H_{r}\cap H_{r}^{\bot}),$$
and ${\rm dim}_{\bF_q}\,(H_{r}\cap H_{r}^{\bot})=m-r-2$ for $r\leq m-2$. Therefore, the following cases are discussed separately.
		
{\bf Case 1:} $s<r\leq m-2$. According to the analysis above, $|D_f\cap H_r^\perp|$ is maximized when $f|_{H_r^\perp}$ is equivalent to Type II and $R=2$. From from (\ref{eq:objvalue1}) we have
	 $$d_r(\C_{D_f})=n-{\rm max} \left\{ \, |D_f\cap H_r^\perp|\, :\, H_r\in \left[ {\mathbb F}_{q^m},r\right] \right\} =q^{m-1}-q^{m-s-1}-q^{m-r-1}-q^{m-r-2}.$$
	
{\bf Case 2:} $r=m-1$. For any $(m-1)$-dimensional subspace $H_{m-1}$ of ${\mathbb F}_{q^m}$, ${\rm dim}_{{\mathbb F}_q}\,H_{m-1}^\perp=1$. From (\ref{eq:R}) we know
	$$ R ={\rm rank}\,f|_{H_{m-1}^{\perp}}=1-{\rm dim}_{\bF_q}\,(H_{m-1}^{\perp}\cap H_{m-1})+ {\rm type}\, f|_{H_{m-1}^\perp}. $$
It is known that the number of solutions for the equation $f|_{H_{m-1}^{\perp}}(x)=a$ is 0 when $R=0$. For the case of $R=1$, by a similar discussion to Theorem~\ref{thm:typei}, we can construct a subspace $H_{m-1}^\perp$ of $\bF_{q^m}$ such that $f|_{H_{m-1}^\perp}$ is equivalent to a quadratic form of Type III and $R=$ rank $f|_{H_{m-1}^\perp} =1$. So, from (\ref{eq:objvalue1}) we have
 $$d_{m-1}(\C_{D_f})=n-{\rm max} \left\{ \, |D_f\cap H_{m-1}^\perp\, :\, H_{m-1}\in \left[ {\mathbb F}_{q^m},m-1\right] \right\} =q^{m-1}-q^{m-s-1}-1.$$
	
{\bf Case 3:} $r=m$. From (\ref{eq:rankofR4}) we know $\bar{R}=R={\rm rank}\,f|_{H_m^{\perp}}=0$. So, $|D_f\cap H_m^\perp|=0$ and
	 $$d_m(\C_{D_f})=q^{m-1}-q^{m-s-1}.$$
\end{proof}

\begin{example}
Let $w$ be a primitive element of ${\mathbb F}_{2^6}$ and $f(x)={\rm Tr}_1^6( w^3x^3)$ be a quadratic form on ${\mathbb F}_{2^6}$, where ${\rm Tr}_1^6(\cdot)$ is a trace function from ${\mathbb F}_{2^6}$ to $\bF_2$. Let $\C_{D_f}$ be a linear code as in (\ref{eq:defcode}), where $D_f=\{ x\in \bF_{2^6} \, |\, f(x)=1 \}$.
By the help of Magma, we obtain the weight hierarchy of $\C_{D_f}$ as follows: $d_1=8, d_2=12, d_3=18, d_4=21, d_5=23,d_6=24$. This result is consistent with Theorem~\ref{thm:iv}.	
\end{example}

By a proof similar to Theorem \ref{thm:iv}, we get the following theorem.
\begin{theorem}\label{thm:v}
Let $m$ be a positive integer and $f$ be a degenerate quadratic form over ${\mathbb F}_{q^m}$ with $rank\ f=2s\,(2s<m)$, which is equivalent to Type \uppercase\expandafter{\romannumeral2}.  Then the linear code $\C_{D_f}$ defined in (\ref{eq:defcode}) has the following weight hierarchy:
	\begin{equation*}
	 	d_r(\C_{D_f})=\begin{cases}
			q^{m-1}+q^{s-1}-q^{m-r-1}-q^{m-s-1}, &{\rm if}\ 1\leq r\leq s-1,\\
			q^{m-1}+q^{s-1}-q^{m-r-1}-q^{m-r-2}, &{\rm if}\ s\leq r<m-1,\\
			q^{m-1}+q^{s-1}-1, &{\rm if}\ r=m-1,\\
			q^{m-1}+q^{s-1}, &{\rm if}\ r=m.
		\end{cases}
	 \end{equation*}
\end{theorem}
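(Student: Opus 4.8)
The plan is to run the argument of Theorem~\ref{thm:iv} with the roles of Type I and Type II interchanged, emphasising only the points where the elliptic case behaves differently. By Lemma~\ref{lem:GHW} it suffices to compute $n-\max\{\,|D_f\cap H_r^\perp| : H_r\in[\fqm,r]\,\}$, where the Type II entry of Lemma~\ref{lem:normalform} (rank $2s$, $\delta(a)=-1$) gives the length $n=q^{m-1}+q^{m-s-1}$. By Proposition~\ref{Proposition1} the intersection $|D_f\cap H_r^\perp|$ depends only on $R=\mathrm{rank}\,f|_{H_r^\perp}$, and with $\dim H_r^\perp=m-r$ it equals $q^{m-r-1}+q^{m-r-(R+2)/2}$ precisely when $f|_{H_r^\perp}$ is equivalent to a Type II form. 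The whole task thus reduces to minimising $R$ subject to $f|_{H_r^\perp}$ being elliptic.

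As in the setup preceding Theorem~\ref{thm:iv}, I would pass to the induced non-degenerate form $\bar f$ on $\bar{\bF}_{q^m}=\fqm/\mathrm{ker}\,f$, which is Type II of dimension $2s$. Since $f|_H$ and $\bar f|_{\bar H}$ share rank and standard type and $R=\bar R$, the minimisation can be carried out entirely inside this $2s$-dimensional elliptic space, and the dimension bookkeeping relating $\dim\bar H_r^\perp$ to $\dim(H_r\cap\mathrm{ker}\,f)$ is identical to Theorem~\ref{thm:iv}. By~(\ref{eq:rankofR4}) one has $\bar R=\dim\bar H_r^\perp-\dim(\bar H_r\cap\bar H_r^\perp)+\mathrm{type}\,\bar f|_{\bar H_r^\perp}$, so $\bar R$ is small exactly when $\bar H_r$ can be chosen highly isotropic, which is where Proposition~\ref{pro:orthogonal} enters.

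The crux, and the only genuinely new point, is the behaviour of the standard type under restriction. The elliptic form $\bar f$ has Witt index $s-1$, so its maximal totally isotropic subspaces have dimension $s-1$. For $1\le r\le s-1$ I would therefore take $\bar H_r$ totally isotropic of dimension $r$; then $\bar H_r\subseteq\bar H_r^\perp$, the radical of $\bar f|_{\bar H_r^\perp}$ is exactly $\bar H_r$, and the induced non-degenerate form on $\bar H_r^\perp/\bar H_r$ has the same type as $\bar f$, namely Type II, of rank $\bar R=2s-2r$. This is the exact opposite of Theorem~\ref{thm:iv}, where a globally hyperbolic $f$ forced the elliptic restriction up to rank $2s-2r+2$; here the minimal elliptic rank is the absolute minimum $2s-2r$. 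Proving that the type is preserved rather than flipped is the main obstacle, and I would settle it by the explicit congruence of the form matrix used in Theorem~\ref{thm:iv}, or by invoking Lemma~11.17 of~\cite{Wan2003}. This same fact dissolves the special role of $r=1$: at $r=1$ the elliptic restriction of rank $2s-2$ is now available and dominates the Type III alternative, so the first branch already covers $r=1$.

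The four ranges then fall out as in Theorem~\ref{thm:iv}. For $1\le r\le s-1$, the value $\bar R=2s-2r$ gives $\max|D_f\cap H_r^\perp|=q^{m-r-1}+q^{m-s-1}$. For $s\le r\le m-2$ no totally isotropic subspace of the full dimension $r$ survives, so the smallest even rank admitting a nonzero elliptic restriction is $R=2$, yielding $\max=q^{m-r-1}+q^{m-r-2}$. For $r=m-1$ the dual $H_{m-1}^\perp$ is one-dimensional and the optimal restriction is Type III with $R=1$, giving $\max=1$; for $r=m$ one has $R=0$ and $\max=0$. Subtracting each maximum from $n$ via Lemma~\ref{lem:GHW} then produces the weight hierarchy of $\C_{D_f}$, the required subspaces being exhibited exactly as in the corresponding cases of Theorem~\ref{thm:iv} through Proposition~\ref{pro:orthogonal} and the congruence argument.
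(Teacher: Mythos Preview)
Your proposal is correct and matches the paper's approach exactly: the paper proves this theorem with the single sentence ``By a proof similar to Theorem~\ref{thm:iv}'', and you have supplied precisely that adaptation, correctly identifying that for a globally elliptic $\bar f$ the minimal Type~II rank on $\bar H_r^\perp$ drops to $2s-2r$ (rather than $2s-2r+2$) via a totally singular $\bar H_r$ of dimension $r\le s-1$, and that consequently the special treatment of $r=1$ disappears. Your computed length $n=q^{m-1}+q^{m-s-1}$ is correct; the $q^{s-1}$ appearing in the displayed hierarchy is a typo for $q^{m-s-1}$, as the worked example following the theorem confirms.
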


\begin{example}
Let $w$ be a primitive element of ${\mathbb F}_{2^4}$ and $f(x)={\rm Tr}_1^4( w^3x^3)$ be a quadratic form on ${\mathbb F}_{2^4}$, where ${\rm Tr}_1^4(\cdot)$ is a trace function from ${\mathbb F}_{2^4}$ to $\bF_2$. Let $\C_{D_f}$ be a linear code as in (\ref{eq:defcode}), where $D_f=\{ x\in \bF_{2^4} \, |\, f(x)=1 \}$.
By the help of Magma, we obtain the weight hierarchy of $\C_{D_f}$ as follows: $d_1=6, d_2=9, d_3=11, d_4=12$. This result is consistent with Theorem~\ref{thm:v}.	
\end{example}

By a discussion similar to Theorem \ref{thm:iv}, we obtain the following theorem.
\begin{theorem}\label{thm:vi}
Let $m$ be a positive integer and and $f$ be a degenerate quadratic form over ${\mathbb F}_{q^m}$ with $rank\ f=2s+1\,(2s+1<m)$, which is equivalent to Type \uppercase\expandafter{\romannumeral3}.  Then the linear code $\C_{D_f}$ defined in~(\ref{eq:defcode}) has the following weight hierarchy:
	\begin{equation*}
	 	d_r(\C_{D_f})=\begin{cases}
			q^{m-1}-q^{m-r-1}-q^{m-\frac{2s+3}{2}}, &{\rm if}\ 1\leq r\leq s,\\
			q^{m-1}-q^{m-r-1}-q^{m-r-2}, &{\rm if}\ s+1\leq r<m-1,\\
			q^{m-1}-1, &{\rm if}\ r=m-1,\\
			q^{m-1}, &{\rm if}\ r=m.
		\end{cases}
	 \end{equation*}
\end{theorem}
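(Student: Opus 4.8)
The plan is to mirror the proof of Theorem~\ref{thm:iv}, working in the non-degenerate quotient. Set $\bar{\bF}_{q^m}=\mathbb{F}_{q^m}/{\rm ker}\,f$ and let $\bar f$ be the induced form; since ${\rm rank}\,f=2s+1$ is odd, $\dim_{\bF_q}\bar{\bF}_{q^m}=2s+1$ and $\bar f$ is a non-degenerate form of Type III, to which the analysis of Theorem~\ref{thm:typeiii} applies. First I would fix the length: by Lemma~\ref{lem:normalform} the Type III count gives $n=|D_f|=q^{m-1}$ for every $a\in\bF_q^*$. By Lemma~\ref{lem:GHW} the task is then to maximize $|D_f\cap H_r^\perp|$ over all $r$-dimensional $H_r$, i.e. over all $(m-r)$-dimensional $V:=H_r^\perp$. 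By Proposition~\ref{Proposition1} this quantity depends on $V$ only through $d=\dim_{\bF_q}V=m-r$ and $R={\rm rank}\,f|_V$, and since $f|_V(x)=\bar f|_{\bar V}(\bar x)$ for $\bar V=\varphi(V)$ the form $f|_V$ has the same rank and standard type as $\bar f|_{\bar V}$. Hence $|D_f\cap V|=q^{m-r-1}+q^{m-r-\frac{R+2}{2}}$ precisely when $\bar f|_{\bar V}$ is of Type II, and the whole problem reduces to determining, for each $r$, the smallest $R$ for which some $(m-r)$-dimensional $V$ makes $\bar f|_{\bar V}$ Type II.

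Because $|D_f\cap V|$ sees $V$ only through $R$, I would push as much of ${\rm ker}\,f$ into $V$ as possible, so that $\bar V$ attains its least possible dimension $\max\{0,\,2s+1-r\}$; this is exactly the bookkeeping used in (\ref{eq:rankofR4}). Writing the rank via (\ref{eq:rankofR}) as $R=\dim_{\bF_q}\bar V-\dim_{\bF_q}(\bar V\cap\bar V^\perp)+{\rm type}\,\bar f|_{\bar V}$, and using $\dim_{\bF_q}\bar V^\perp=(2s+1)-\dim_{\bF_q}\bar V$ together with the fact that a Type III form on a $(2s+1)$-dimensional space has self-orthogonal subspaces of dimension at most $s$ (Proposition~\ref{pro:orthogonal}), one bounds $R$ from below. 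Since a Type II restriction forces ${\rm type}\,\bar f|_{\bar V}=0$ and even rank, a parity count then shows the least even rank realizable is $R=2s-2r+2$ when $1\le r\le s$, and that it drops to $R=2$ as soon as $r\ge s+1$ (once $\bar V$ has enough room to decompose as a totally isotropic part plus a single anisotropic Type II plane). To certify that Type II, rather than the cheaper Type I carrying the opposite sign, is actually achieved at these ranks, I would reproduce the explicit construction of Theorem~\ref{thm:iv}: assemble $\bar H_r$ from the self-orthogonal subspaces of Proposition~\ref{pro:orthogonal}, adjoin a fixed $2$-dimensional anisotropic block, and check through the congruence $G^{T}M_f(\mathcal B)G$ and Lemma~11.17 of~\cite{Wan2003} that the ambient form remains Type III while $\bar f|_{\bar V}$ is Type II.

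With the optimal rank in hand, the four branches follow by substituting into Proposition~\ref{Proposition1}: $R=2s-2r+2$ for $1\le r\le s$, $R=2$ for $s+1\le r\le m-2$, a rank-$1$ Type III restriction on a line $V$ for $r=m-1$ (giving intersection $1$), and $V=\{0\}$ for $r=m$ (giving intersection $0$); subtracting the maxima from $n=q^{m-1}$ produces the stated hierarchy. One clean feature worth flagging is that, unlike the Type I situation of Theorem~\ref{thm:iv}, the case $r=1$ needs no separate treatment: here $\dim_{\bF_q}\bar V=2s$ already accommodates a full-rank Type II (elliptic) hyperplane of the parabolic space $\bar{\bF}_{q^m}$, so Type II is attainable at $r=1$ too. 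I expect the main obstacle to be the quotient minimization of the middle paragraph --- proving that $2s-2r+2$ is exactly the least even rank of a Type II restriction for $r\le s$, which means simultaneously controlling how much of ${\rm ker}\,f$ can be concealed in $V$ and excluding, via the type/parity obstruction, any Type II restriction of smaller rank. The boundary $r=s$ where the two regimes meet, and the verification that the construction never drags the global form off Type III, are the delicate points.
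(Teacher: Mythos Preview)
Your plan is correct and matches the paper's own treatment, which disposes of Theorem~\ref{thm:vi} by a one-line reference to the argument of Theorem~\ref{thm:iv}; the passage to the non-degenerate quotient $\bar{\bF}_{q^m}$, the rank/type bookkeeping via (\ref{eq:rankofR4}), and the explicit Type~II construction are exactly what that reference unpacks to. Your remark that the case $r=1$ needs no special handling because a parabolic $(2s{+}1)$-space already admits an elliptic (Type~II) hyperplane is a genuine streamlining over the Type~I analysis, and one small caution: the identity $\dim_{\bF_q}\bar V^\perp=(2s{+}1)-\dim_{\bF_q}\bar V$ you invoke fails when $\bar V$ contains the nucleus of $\ell_{\bar f}$, but since containing the nucleus forces ${\rm type}\,\bar f|_{\bar V}=1$ this case is irrelevant for Type~II restrictions and your bound goes through.
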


\begin{example}
Let $w$ be a primitive element of ${\mathbb F}_{2^6}$ and $f(x)={\rm Tr}_1^6( wx^5)$ be a quadratic form on ${\mathbb F}_{2^6}$, where ${\rm Tr}_1^6(\cdot)$ is a trace function from ${\mathbb F}_{2^6}$ to $\bF_2$. Let $\C_{D_f}$ be a linear code as in (\ref{eq:defcode}), where $D_f=\{ x\in \bF_{2^6} \, |\, f(x)=1 \}$.
By the help of Magma, we obtain the weight hierarchy of $\C_{D_f}$ as follows: $d_1=12, d_2=20, d_3=26, d_4=29,d_5=31,d_6=32$. This result is consistent with Theorem~\ref{thm:vi}.	
\end{example}

\section{Concluding remarks}\label{sec5}

Quadratic forms on $\bF_{q^m}$ behave quite differently depending on whether the characteristic of $\bF_q$ is $2$. Some additional information is needed to classify
quadratic forms over finite fields of even characteristic. For a quadratic form $f$ over $\bF_{q^m}$, where $q$ is a power of $2$, by carefully studying the behavior of the quadratic form $f$ restricted to subspaces of $\bF_{q^m}$, we obtained the number of solutions for the restricted quadratic equation $f|_{H}(x)=a$, where $a\in \bF_{q^m}^*$ and $f|_{H}$ is the restriction of $f$ to a subspace $H\subset \bF_{q^m}$. Based on this result, we determined completely the weight hierarchies of linear codes from quadratic forms over finite fields of even characteristic. Our results complement the results in~\cite{Li2021,Li2022}, and the weight hierarchies of linear codes from quadratic forms were completely determined. By the help
of Magma, we gave some numerical examples, which verified the correctness of our theorems.

\end{document}